\newtheorem{proposition}{Proposition}
\newtheorem{conj}{Conjecture}
\numberwithin{equation}{section}
\def\lfig#1#2#3#4#5{
\begin{figure}[t]
 \centerline{\includegraphics[width=#3]{#2}}
 \vspace{#5}
  \caption{#1 \label{#4}}
 \end{figure}
}
\def\bea{\begin{eqnarray}}
\def\eea{\end{eqnarray}}
\def\be{\begin{equation}}
\def\ee{\end{equation}}
\def\ba{\begin{align}}
\def\ea{\end{align}}
\def\bse{\begin{subequations}}
\def\ese{\end{subequations}}
\newcommand{\nn}{\nonumber}
\def\det{\,{\rm det}\, }
\def\sign{{\rm sgn}}
\def\Span{{\rm Span}}
\def\Ch{{\rm Ch}}
\def\Sym{\,{\rm Sym}\, }
\def\Im{\,{\rm Im}\,}
\DeclareMathOperator{\Erf}{Erf}
\newcommand{\sgn}{\mathop{\rm sgn}}
\def\({\left(}
\def\){\right)}
\def\[{\left[}
\def\]{\right]}
\def\<{\left\langle}
\def\>{\right\rangle}
\def\hf{{1\over 2}}
\newcommand{\p}{\partial}
\newcommand{\eps}{\epsilon}
\newcommand{\de}{\mathrm{d}}
\newcommand{\I}{\mathrm{i}}
\newcommand{\e}{\mathrm{e}}
\newcommand{\rmR}{\mathrm{R}}
\newcommand{\Ssf}{\mathsf{S}}
\newcommand{\esf}{\mathsf{e}}
\newcommand{\ssf}{\mathsf{s}}
\newcommand{\ksf}{\mathsf{k}}
\newcommand{\cD}{\mathcal{D}}
\newcommand{\cV}{\mathcal{V}}
\newcommand{\cC}{\mathcal{C}}
\newcommand{\cS}{\mathcal{S}}
\newcommand{\cB}{\mathcal{B}}
\newcommand{\cK}{\mathcal{K}}
\newcommand{\cM}{\mathcal{M}}
\newcommand{\cE}{\mathcal{E}}
\newcommand{\cT}{\mathcal{T}}
\newcommand{\cJ}{\mathcal{J}}
\newcommand{\cO}{\mathcal{O}}
\newcommand{\IT}{\mathds{T}}
\newcommand{\IR}{\mathds{R}}
\newcommand{\IZ}{\mathds{Z}}
\newcommand{\IN}{\mathds{N}}
\newcommand{\IP}{\mathds{P}}
\def\scR{\mathscr{R}}
\def\Ev{\mathscr{E}}
\def\Zv{\mathscr{Z}}
\def\frK{\mathfrak{K}}
\newcommand{\bfu}{{\boldsymbol u}}
\newcommand{\bfb}{{\boldsymbol b}}
\newcommand{\bfv}{{\boldsymbol v}}
\newcommand{\bfk}{{\boldsymbol k}}
\newcommand{\bfx}{{\boldsymbol x}}
\newcommand{\bfy}{{\boldsymbol y}}
\def\tesf{\tilde\esf}
\def\resf{\mathring{\esf}}
\def\ba{\bar a}
\def\btau{\bar \tau}
\def\hcT{\hat\cT}
\def\hesf{\hat\esf}
\def\hgam{\hat\gamma}
\def\gama{\check\gamma}
\def\CY{\mathfrak{Y}}
\def\ver{\mathfrak{v}}
\def\cEf{\cE^{(0)}}
\def\Ef{\Ev^{(0)}}
\def\rmRi#1{\rmR^{(#1)}}
\def\whh{\widehat h}
\def\EPhi{\Phi^{\,\Ev}}
\def\cEPhi{\Phi^{\,\cE}}
\def\vu{\mathfrak{u}}
\def\cEp{\cE^{(+)}}
\def\Ep{\Ev^{(+)}}
\def\Lv#1{L(#1)}
\def\Rv#1{R(#1)}
\def\frK{\kappa}
\newcommand{\symfootnote}[1]{%
\let\oldthefootnote=\thefootnote%
\setcounter{mpfootnote}{2}%
\addtocounter{footnote}{-1}%
\renewcommand{\thefootnote}{\fnsymbol{mpfootnote}}%
\footnote{#1}%
\let\thefootnote=\oldthefootnote%
}
\title{Modular anomaly of BPS black holes}
\preprint{arXiv:2204.02207v3}
\author{Sergei Alexandrov and Khalil Bendriss
\\
{\it Laboratoire Charles Coulomb (L2C), Universit\'e de Montpellier,
CNRS, \\ F-34095, Montpellier, France}\\

\vspace*{2mm} {\tt e-mail:
\email{sergey.alexandrov@umontpellier.fr},
\email{khalil.bendriss@umontpellier.fr}
}

\vspace*{-3mm}

}
\abstract{Generating functions of BPS indices, counting states of D4-D2-D0 black holes 
in Calabi-Yau compactifications of type IIA string theory and identified with rank 0 Donaldson-Thomas invariants, 
are examples of mock modular forms. They have a quite complicated modular anomaly expressed as 
a sum over three different types of trees weighted by generalized error functions and their derivatives.
We show that this anomaly can be significantly simplified, which in turn simplfies 
finding the corresponding mock modular generating functions. 
}
\begin{document}

\setlength{\parskip}{0.1cm}

\section{Introduction}

Modularity is an extremely powerful tool to obtain exact results in situations where the modular group, typically $SL(2,\IZ)$, 
is a symmetry of the system. A prominent example is the calculation of black hole entropy in compactified string theory, 
see, e.g., \cite{Strominger:1996sh,Dijkgraaf:1996it,Maldacena:1997de,Dabholkar:2005dt,Gaiotto:2006wm} for some classic works.
In most of these classic examples, the generating function of black hole degeneracies turns out to be a modular form,
sometimes involving a non-trivial multiplier system or of vector valued type.  
But this is not the only possibility as today we know examples where the generating function
has a non-trivial modular anomaly. Usually, the anomaly is of a very specific form so that
the generating function is still considered as a `nice' modular object.

The first such example in the context of black holes was discovered in \cite{Dabholkar:2012nd}
where it was shown that the generating functions of degeneracies of immortal dyons,
i.e. single centered $\frac14$-BPS black holes
appearing in compactifications of type II string theory on $K3\times T^2$, 
the same that have been counted in \cite{Dijkgraaf:1996it},
are {\it mock} modular forms. This is a particularly nice class of functions for which the modular anomaly
is determined by an integral of another modular form called ``shadow" \cite{Zwegers-thesis,MR2605321}.
If the shadow is a product (or sum of products) of a holomorphic and anti-holomorphic modular forms, 
as is the case for \cite{Dabholkar:2012nd}, one says that the mock modular form is mixed.

A few years ago a much richer modular structure was discovered in the context of type IIA string theory
on a Calabi-Yau (CY) threefold $\CY$. 
In \cite{Alexandrov:2018lgp} it was shown that the generating functions of BPS indices 
counting states of D4-D2-D0 black holes and identified with rank 0 Donaldson-Thomas (DT) invariants of $\CY$
are {\it higher depth} mock modular forms, with the depth determined by the D4-brane charge.
This class of functions generalizes mixed mock modular forms and is defined recursively:
a function is mock modular of depth $n$ if its shadow is mock modular of depth $n-1$.
In this terminology the usual modular and mock modular forms have depth 0 and 1, respectively.

The main physical reason for the appearance of mock modularity in both \cite{Dabholkar:2012nd} and 
\cite{Alexandrov:2018lgp}, which describe respectively string theory compactifications with $N=4$ and 
$N=2$ supersymmetry in 4 dimensions, is the existence of multi-centered black holes contributing to the index.
In the $N=4$ context, the only multi-centered black holes are bound states of two $\hf$-BPS black holes, 
while in the $N=2$ setup they can have any number of constituents. This is why the depth of mock modularity 
is 1 in the first case and can take any value in the second.

The work \cite{Alexandrov:2018lgp} not only identified the mathematical nature 
of the generating functions,
but also found the precise anomaly equation which they satisfy
(the cases of depth 0 and 1 have been analyzed before in \cite{Alexandrov:2012au,Alexandrov:2016tnf,Alexandrov:2017qhn}).
This equation takes the form of an expression for {\it modular completion},
a non-holomorphic modification of the corresponding mock modular form
by terms that are suppressed in the limit $\Im\tau\to \infty$
such that it transforms as usual modular form without any anomaly.

This result has already found several applications and has been extended beyond 
the original setup of $N=2$ supergravity.
First, in \cite{Alexandrov:2019rth} it has been generalized to include the so-called refinement,
a one-parameter deformation corresponding physically to switching on the $\Omega$-background \cite{Nekrasov:2002qd}. 
This extension has then been used to find generating functions and their completions of
the (refined) $SU(N)$ Vafa-Witten invariants on $\IP^2$, Hirzebruch
and del Pezzo surfaces \cite{Alexandrov:2020bwg,Alexandrov:2020dyy}, 
which coincide with the (refined) rank 0 DT invariants
on the non-compact CY given by the total space of the canonical bundle over the complex surface
\cite{Minahan:1998vr,Alim:2010cf,Gholampour:2013jfa,gholampour2017localized}. 
The anomaly equation has also been extended to incorporate the cases with larger supersymmetry
such as $N=4$ and $N=8$ \cite{Alexandrov:2020qpb}, 
which allowed to reproduce and generalize the results of \cite{Dabholkar:2012nd} on mock modularity of immortal dyons.
Finally, in \cite{Alexandrov:2023ltz}, using a solution of the modular anomaly found in \cite{Alexandrov:2022pgd},
the generating functions of D4-charge 2 BPS indices for a few one-parameters CY threefolds
have been explicitly found. This generalized the previous works
\cite{Gaiotto:2006wm,Gaiotto:2007cd,Collinucci:2008ht,VanHerck:2009ww,Alexandrov:2023zjb}
where the case of the unit charge, corresponding to depth 0 or standard modular forms, was analyzed
and provided the first examples of mock modular forms assigned to CY threefolds without any additional structures.

A further progress is however complicated by a highly involved form of the modular anomaly or, equivalently, 
of the modular completion of the generating functions.
As we review in section \ref{sec-modcompl}, it is given by a sum over three different types of trees 
used to construct kernels of indefinite theta series from the so-called generalized error functions 
and their derivatives.
The goal of this paper is to improve this situation. 
We show that the expression for the completion found in \cite{Alexandrov:2018lgp}
can actually be significantly simplified. The new expression involves only two types of trees 
and much less terms than the original one. 
(Roughly, the main change is that the functions \eqref{rescEn} are replaced by the functions \eqref{rescEn-new}.)
We hope that this simplification 
will be helpful in solving the anomaly equations for higher depth cases.

The organization of the paper is as follows.
In the next section we present the expression for the completion of generating functions 
obtained in \cite{Alexandrov:2018lgp}. We also find that this expression misses some subtle contributions
which we determine using some remarkable identities satisfied by the generalized error functions.  
In section \ref{ap-exprR} we prove a simplified form of the completion.
In section \ref{sec-concl} we discuss our results. In appendix \ref{ap-generr}
we recall the definition and derive some new properties of the generalized error functions,
including a set of their identities.
In appendix \ref{ap-coef} we collect values of some coefficients entering the completion 
and prove also an identity which they satisfy.
Finally, in appendix \ref{ap-rational} we provide details on how one arrives at the representation
in terms of these coefficients.

\section{Expression for modular completion}
\label{sec-modcompl}

We are interested in D4-D2-D0 BPS states characterized by charge vector $(p^a,q_a,q_0)$
where $a=1,\dots, b_2(\CY)$. The D4-brane charge $p^a$ is an element of the lattice $\Lambda=H_4(\CY,\IZ)$,
while the D2-brane charge $q_a$ belongs to the shifted dual lattice $\Lambda^\star+\hf\, p$.
Given the triple intersection numbers $\kappa_{abc}$ of $\CY$ and D4-brane charge, we define a quadratic form
$\kappa_{ab}=\kappa_{abc}p^c$. It allows to embed $\Lambda$ into $\Lambda^\star$ and represent
\be
q_a=\kappa_{ab}\eps^b+\mu_a +\hf\, \kappa_{ab} p^b,
\label{defmu}
\ee 
where $\eps\in\Lambda$ and $\mu\in \Lambda^\star/\Lambda$ is the residue class which runs over 
$|\det\kappa_{ab}|$ values. The generating functions of D4-D2-D0 BPS indices depend only on 
the D4-charge $p^a$ and the residue class $\mu_a$ and will be denoted $h_{p,\mu}(\tau)$.
According to the results of \cite{Maldacena:1997de,Alexandrov:2012au,Alexandrov:2016tnf,Alexandrov:2018lgp},
under the standard $SL(2,\IZ)$ transformations $\tau\mapsto \frac{a\tau+b}{c\tau+d}$,
they behave as higher depth vector valued mock modular forms
of weight $-\hf b_2-1$ and a multiplier system closely related to 
the Weil representation attached to the lattice $\Lambda$ (see \cite[Eq.(2.10)]{Alexandrov:2019rth}).

A more precise requirement is that the following non-holomorphic function 
\be
\whh_{p,\mu}(\tau,\btau)= h_{p,\mu}(\tau)
+\sum_{n=2}^{n_{\rm max}}
\sum_{\sum_{i=1}^n \gama_i=\gama}
R_n(\{\gama_i\},\tau_2)
\, e^{\pi\I \tau Q_n(\{\gama_i\})}
\prod_{i=1}^n h_{p_i,\mu_i}(\tau)
\label{exp-whh}
\ee
must transform as a modular form of the same weight and multiplier system as $h_{p,\mu}$.
Here the sum runs over all ordered decompositions of $\hgam=(p^a,\mu_a+\frac12 \kappa_{ab} p^b)$
into a sum of $n$ reduced\footnote{We use notation $\gama$ and call it `reduced charge'
because $\gamma$ is usually reserved for the full charge vector that includes also D6 and D0-charges.} 
charge vectors $\hgam_i=(p_i^a,q_{i,a})$
where $q_{i,a}$ are expressed as in \eqref{defmu} with quadratic forms $\kappa_{i,ab}=\kappa_{abc}p_i^c$
and residue classes $\mu_i$, and the first components satisfy (using notation $(xyz)=\kappa_{abc} x^a y^b z^c$)
\be
\label{khcone}
p^3> 0,
\qquad
(r p^2)> 0,
\qquad
k_a p^a > 0,
\ee
for all  divisor classes $r\in H_4^+(\CY,\IZ)$ and curve classes $k \in H_2^+(\CY,\IZ)$.
Physically, this sum corresponds to a sum over all possible bound states of a given charge $\hgam$, 
and the upper limit $n_{\rm max}$ in the sum over the number of constituents $n$
is given by the number of irreducible components which the divisor corresponding to D4-charge $p^a$ 
can be decomposed in.
Besides, in \eqref{exp-whh} we also used notation $Q_n$ for the following quadratic form
\be
Q_n(\{\gama_i\})= \kappa^{ab}q_a q_b-\sum_{i=1}^n\kappa_i^{ab}q_{i,a} q_{i,b} \, ,
\label{defQlr}
\ee
where $\kappa_i^{ab}$ is the inverse of $\kappa_{i,ab}$.
Since $\kappa_{ab}$ is known to have signature $(1,b_2-1)$, the signature of the quadratic form $Q_n$
is $((n-1)(b_2-1),n-1)$. Thus, the sum over D2-brane charges in \eqref{exp-whh}
defines an indefinite theta series of that signature with the kernel given by the function
$R_n$ which is the main non-trivial ingredient of the construction.

\lfig{An example of Schr\"oder tree contributing to $R_8$. 
	Near each vertex we showed the corresponding factor
	using the shorthand notation $\gamma_{i+j}=\gamma_i+\gamma_j$.
}
{WRtree-new3}{9.75cm}{fig-Rtree}{-1.2cm}

To define $R_n$, let $\IT_n^{\rm S}$ be the set of the so called 
Schr\"oder trees with $n$ leaves, which are rooted planar trees such that all vertices 
$v\in V_T$ (the set of vertices of $T$ excluding the leaves) 
have $k_v\geq 2$ children (see Fig. \ref{fig-Rtree}). 
We also denote the number of elements in $V_T$ by $n_T$, 
and the root vertex by $v_0$.
The vertices of $T$ are labelled by charges so that the leaves carry charges $\hgam_i$, 
whereas the charges assigned to other vertices
are given recursively by
the sum of charges of their children, $\hgam_v\in\sum_{v'\in\Ch(v)}\hgam_{v'}$.
On top of that, we consider a set of functions $\cE_n(\{\gama_i\};\tau_2)$ 
which have a canonical decomposition
\be
\cE_n(\{\gama_i\};\tau_2)=\cEf_n(\{\gama_i\})+\cEp_n(\{\gama_i\};\tau_2),
\label{twocEs}
\ee
where the first term $\cEf_n$ does not depend on $\tau_2$,
whereas the second term $\cEp_n$ is exponentially suppressed as $\tau_2\equiv\Im\tau\to\infty$ keeping
the charges $\hgam_i$ fixed. The precise definitions of these functions will be given below.
Given a Schr\"oder tree $T$,
we set $\cE_{v}\equiv \cE_{k_v}(\{\hgam_{v'}\})$ (and similarly for $\cEf_{v}, \cEp_{v}$)
where $v'\in \Ch(v)$ runs over the $k_v$ children of the vertex $v$.
In terms of these data, the functions $R_n$ are given by\footnote{Comparing to \cite[Eq.(5.34)]{Alexandrov:2018lgp},
there is an additional factor $2^{1-n}(-1)^{\sum_{i<j} \gamma_{ij} }$.
In \cite{Alexandrov:2018lgp} it was hidden in the functions $\cE_n$, while here we prefer to 
move it to \eqref{solRn}.}  
\be
R_n(\{\gama_i\};\tau_2)=\Sym\Biggl\{\frac{(-1)^{\sum_{i<j} \gamma_{ij} }}{2^{n-1}} \sum_{T\in\IT_n^{\rm S}}(-1)^{n_T-1} 
\cEp_{v_0}\prod_{v\in V_T\setminus{\{v_0\}}}\cEf_{v}\Biggr\},
\label{solRn}
\ee
where $\Sym$ denotes symmetrization (with weight $1/n!$) with respect to the charges $\gama_i$
and 
\be
\gamma_{ij} = p^a_j q_{i,a} - p^a_i q_{j,a} 
\label{def-gammaij}
\ee
is the anti-symmetric Dirac product of charges.
Note that since all $\gamma_{ij}$ are integer and mapped to each other up to sign under permutations,
the first factor is not affected by the symmetrization and can be put in front.

It remains to define the functions $\cE_n$.
To this end, let $\IT_{n,m}^\ell$ be the set of marked unrooted labelled trees with $n$ vertices
and $m$ marks where the marks are assigned to vertices and each mark adds 2 labels to its vertex.
For example, $|\IT_{2,1}^\ell|=4$ with all trees having the same topology (2 vertices, one of which carries a mark),
4 labels (3 at the marked vertex and 1 at the non-marked vertex) and differing by the label assigned 
to the non-marked vertex.
In our case the labels are identified with the charges. Thus, given a set $\{\gama_1,\dots,\gama_{n+2m}\}$, 
the trees are decorated in the following way. 
Let $m_\ver\in \{0,\dots m\}$ be the number of marks carried by the vertex $\ver$, 
so that $\sum_\ver m_\ver=m$. Then a vertex $\ver$ with $m_\ver$ marks carries 
$1+2m_\ver$ charges $\gama_{\ver,s}$, $s=1,\dots,1+2m_\ver$ and 
we set $\gama_\ver=\sum_{s=1}^{1+2m_\ver}\gama_{\ver,s}$.
Given a tree $\cT\in \IT_{n,m}^\ell$, we denote the set of its edges by $E_{\cT}$, the set of vertices by $V_{\cT}$, 
the source and target vertex\footnote{The orientation 
	of edges on a given tree can be chosen arbitrarily since the final result does not depend on this choice,
	although various intermediate quantities can flip sign under the flip of the orientation.} 
of an edge $e$ by $s(e)$ and $t(e)$, respectively,
and the two disconnected trees obtained from $\cT$ by removing the edge $e$ by $\cT_e^s$ and $\cT_e^t$. 
Furthermore, we will use the boldface script to denote $nb_2$-dimensional vectors and 
assign to each edge the vector 
\be
\bfv_e=\sum_{i\in V_{\cT_e^s}}\sum_{j\in V_{\cT_e^t}}\bfv_{ij},
\label{defue}
\ee
where $\bfv_{ij}$ are the vectors with the following components
\be
v_{ij,k}^a=\delta_{ki} p^a_j-\delta_{kj} p^a_i.
\label{defvij}
\ee

Using these notations, we define 
\cite[Eq.(5.32)]{Alexandrov:2018lgp}
\be
	\cEPhi_n(\bfx)=
	\frac{1}{ n!}\sum_{m=0}^{[(n-1)/2]}
	\sum_{\cT\in\, \IT_{n-2m,m}^\ell}
	\[\prod_{\ver\in V_\cT}\cD_{m_\ver}(\{p_{\ver,s}\})\prod_{e\in E_\cT} \cD(\bfv_{s(e) t(e)})\]
	\Phi^E_{\cT}(\bfx).
\label{rescEn}
\ee
Here $\Phi^E_{\cT}$ is a generalized error function defined by an unrooted labelled tree 
in terms of the usual (boosted) generalized error function $\Phi^E_n$ with parameters given by the vectors \eqref{defue},
\be 
\Phi^E_{\cT}(\bfx)=\Phi_{|E_\cT|}^E(\{\bfv_e\};\bfx).
\label{defPhiT-main}
\ee 
The functions $\Phi^E_n$ are reviewed
in appendix \ref{ap-generr}  and defined with respect to the bilinear form
\be
\bfx\cdot\bfy= \sum_{i=1}^n \kappa_{abc}p_i^a x_i^b y_i^c.
\label{biform}
\ee
Besides, $\cD(\bfv)$ is the differential operator \eqref{defcD}
and $\cD_{m}(\{p_{s}\})$ is another differential operator associated with the existence of marks.
It is given by a sum over unrooted labelled trees with $2m+1$ vertices. To write the precise formula, 
we need to introduce rational coefficients $a_\cT$ determined recursively by the relation
\be
a_\cT=\frac{1}{n_\cT}\sum_{\ver\in V_\cT} (-1)^{n_\ver^+} \prod_{s=1}^{n_\ver} a_{\cT_s(\ver)},
\label{res-aT}
\ee
where $n_\cT$ is the number of vertices, $n_\ver$ is the valency of the vertex $\ver$,
$n_\ver^+$ is the number of incoming edges at the vertex, and
$\cT_s(\ver)$ are the trees obtained from $\cT$ by removing the vertex.
The relation \eqref{res-aT} is supplemented by the initial condition for the trivial tree 
consisting of one vertex, $a_\bullet=1$, and one can show that $a_\cT=0$ for all trees with even number of vertices.
(We provide a table of these coefficients for trees with $n_\cT\leq 7$ in appendix \ref{ap-coef}.)
Then we have
\be
\cD_{m}(\{p_{s}\})=\sum_{\cT\in\, \IT_{2m+1}^\ell} a_{\cT}\prod_{e\in E_{\cT}}\cD(\bfv_{s(e)t(v)}).
\label{defcDcT}
\ee
Note that all vectors entering the definition of $\cD_{m_\ver}$ are orthogonal to the vectors 
appearing as parameters of the operators $\cD$ and the generalized error functions in \eqref{rescEn}.
Therefore, $\cD_{m_\ver}$ do not actually act on those factors and can be replaced by
usual functions obtained as $\cD_{m_\ver}\cdot 1$.

In terms of the functions \eqref{rescEn}, we finally set
\be
\cE_n(\{\gama_i\};\tau_2)=
\frac{\cEPhi_n(\bfx)}{(\sqrt{2\tau_2})^{n-1}}\, ,
\label{rescEnPhi}
\ee
where $\bfx$ is a vector with components $x^a_i=\sqrt{2\tau_2}\, \kappa_i^{ab}q_{i,b}$.
In particular, its scalar products with the vectors $\bfv_{ij}$ \eqref{defvij}
reproduce the Dirac products \eqref{def-gammaij}, i.e. $\bfv_{ij}\cdot\bfx=\sqrt{2\tau_2}\gamma_{ij}$.

This completes the definition of the functions $R_n$ determining the modular anomaly 
of the generating functions via \eqref{exp-whh}. Given that their building blocks 
$\cE_n$ are constructed from the generalized error functions with parameters defined by charges $\gama_i$,
they have the meaning of kernels of indefinite theta series providing completions 
for holomorphic theta series constructed from sign functions of Dirac products of these charges.

\subsection{Large $\tau_2$ limit and vanishing Dirac products}
\label{subsec-degen}

Although the above construction provides a complete definition of the functions $R_n$,
to actually compute them, one should perform the split \eqref{twocEs}, which amounts 
to evaluating the large $\tau_2$ limit of the functions $\cE_n$ \eqref{rescEnPhi}.
This has been done in \cite[Eq.(5.29)]{Alexandrov:2018lgp} with the following result
\be
\begin{split}
\cEf_n(\{\gama_i\})\equiv&\,  \lim_{\tau_2\to\infty}\cE_n(\{\gama_i\};\tau_2)
\\
=&\, \frac{1}{n!}
\sum_{m=0}^{[(n-1)/2]}\sum_{\cT\in\, \IT_{n-2m,m}^\ell}\frK_\cT(\{\gama_{\ver}\})
\prod_{\ver\in V_\cT}\cV_{m_\ver}(\{\gama_{\ver,s}\})
\prod_{e\in E_{\cT}}\sign(\Gamma_e),
\end{split}
\label{defEn0new}
\ee
where
\be
\Gamma_e=\sum_{i\in V_{\cT_e^s}}\sum_{j\in V_{\cT_e^t}}\gamma_{ij},
\qquad
\frK_\cT(\{\gama_{\ver}\})=\prod_{e\in E_{\cT}}\gamma_{s(e) t(e)},
\label{defGame}
\ee
and 
\be
\cV_{m}(\{\gama_{s}\})=\sum_{\cT\in\, \IT_{2m+1}^\ell} a_{\cT}\frK_\cT(\{\gama_{s}\}).
\label{deftcV-mw}
\ee

However, this result is not quite precise. It was derived from the fact that the generalized error functions
$\Phi_n^E(\{\bfv_i\};\bfx)$ at large values of their argument $\bfx$ 
reduce to a product of sign functions (see appendix \ref{ap-generr} below \eqref{generrPhiME}). 
But along certain directions, characterized by vanishing of two or more scalar products 
$\bfv_i\cdot \bfx$, this is not true, at least if we accept the convention that 
$\sgn(0)=0$.\footnote{This issue has already been realized in \cite{Alexandrov:2019rth} where an improved solution 
has been proposed in the refined case, but it was claimed that the unrefined case does not suffer from this problem.
Our analysis shows that this claim was naive.}
The correct limit is given by Proposition \ref{prop-largex} in appendix \ref{ap-generr}.
From this result it is easy to derive a corrected version of \eqref{defEn0new}. 
Essentially, the derivation given in the proof of \cite[Prop.5]{Alexandrov:2018lgp} goes through upon 
replacing $\prod_{e\in E_{\cT}}\sign(\Gamma_e)$ by 
\be 
\Ssf_\cT(\{\gama_\ver\})=
\sum_{\cJ\subseteq E_\cT}\esf_{\cT_\cJ}(\{\bfv_e\}_{e\in\cJ})
\,\prod_{e\in \cJ}\delta_{\Gamma_e}
\prod_{e\in E_\cT\setminus \cJ} \sgn (\Gamma_e),
\label{exprSsf}
\ee
where\footnote{We left the dependence of the coefficients $\esf_\cT$ on the vectors $\bfv_e$ explicit
to distinguish them from other coefficients $e_\cT$ appearing below, which depend only on the tree topology,
but not on its labelling by charges.} 
\be 
\esf_\cT(\{\bfv_e\})=\Phi_{\cT}^E(0)
\ee
and $\cT_\cJ$ is the tree obtained from $\cT$ by contracting the edges $e\in E_\cT\backslash \cJ$.
The crucial observation for this is that the coefficients $\esf_\cT$ satisfy
\be
\label{esfT-T123id}
\esf_{\hat{\cT}_1}+\esf_{\hat{\cT}_2}-\esf_{\hat{\cT}_3}=-\esf_{\cT},
\ee
where $\hcT_r$ ($r=1,2,3$) are the trees constructed from arbitrary unrooted trees $\cT_r$ as 
shown in Fig. \ref{fig-Vign3}, while $\cT$ is obtained from any tree $\hcT_r$ by collapsing both edges $e_i$. 
This relation is direct consequence of the identity \eqref{PhiT-T123id} for the generalized error functions
specified to $\bfx=0$.
As a result, one finds the following expression for the large $\tau_2$ limit of the functions $\cE_n$:
\be
\cEf_n(\{\gama_i\})
= \frac{1}{n!}
	\sum_{m=0}^{[(n-1)/2]}\sum_{\cT\in\, \IT_{n-2m,m}^\ell}
	\[\prod_{\ver\in V_\cT}\cV_{m_\ver}(\{\gama_{\ver,s}\})\]\frK_\cT(\{\gama_{\ver}\})\,\Ssf_\cT(\{\gama_\ver\}).
\label{En0new}
\ee

\lfig{Unrooted trees constructed from the same three subtrees.}
{Vign3}{19.5cm}{fig-Vign3}{-1.cm}

It is useful to note that the sum over marked trees can also be rewritten in terms of a sum over subsets of 
$\Zv_{n}=\{1,\dots,n\}$ and a single function given by the sum over non-marked trees. 
Namely, let us introduce 
\be 
\Ef_n(\{\gama_i\})=\frac{1}{n!}\sum_{\cT\in\, \IT_n^\ell}
\frK_\cT(\{\gama_{\ver}\})\, \Ssf_\cT(\{\gama_\ver\}),
\label{exprcEf-new}
\ee
where the sum goes over labelled trees without marks,
which is nothing but the $m=0$ part of $\cEf_n$.
Let also $\{\cJ_j\}$ be a set of non-intersecting subsets of $\Zv_{n}$ 
of odd cardinality $|\cJ_j|\equiv 2m_j+1$, $m_j\in \IN$, and $\gama^\cJ_j=\sum_{i\in\cJ_j}\gama_i$.
Then it is easy to see that \eqref{En0new} is equivalent to 
\be
\cEf_n(\{\gama_i\};\tau_2)=\sum_{m=0}^{[(n-1)/2]}\sum_{\cup_{j=1}^{n-2m}\cJ_j=\Zv_{n}}
\[\prod_{j=1}^{n-2m}\cV_{m_j}(\{\gama_i\}_{i\in\cJ_j})\]\Ef_{n-2m}(\{\gama^\cJ_j\};\tau_2).
\label{cEf-subset}
\ee
A similar representation also exists for the full function $\cE_n$ defined by \eqref{rescEn}
and \eqref{rescEnPhi}, but we will not use it in this paper.

In principle, \eqref{En0new} or \eqref{cEf-subset} solves the problem of correcting \eqref{defEn0new}.
Note, however, that while for even number of vertices (and hence odd number of edges) of $\cT$ 
the coefficients $\esf_\cT$ vanish due to the oddness of the corresponding generalized error functions (see \eqref{oddPhi}),
in the opposite case they are, in general, irrational numbers.\footnote{For generic tree, $\esf_\cT$
	is irrational even being computed on the simplest set of charges all taken to be equal. A notable exception
	is the tree of linear topology $\bullet\!\mbox{---}\!\bullet\!\mbox{--}\cdots \mbox{--}\!\bullet\!\mbox{---}\!\bullet\,$ 
	for which $\esf_\cT$ is equal to $1/n_\cT$ if $n_\cT$ is odd and 0 otherwise, 
	as has been observed in \cite{Alexandrov:2019rth}.
	In fact, this is a consequence of the identity \eqref{ident-eT} which holds for arbitrary charges.}
On the other hand, only rational numbers 
are expected to appear in the completion of a mock modular form with rational Fourier coefficients.
This implies that there should exist an alternative representation 
and, on the basis of this expectation, we suggest the following 

\begin{conj}
\label{conj-eT}
\be
\Ef_n(\{\gama_i\})
= \frac{1}{n!}
\sum_{\cT\in\, \IT_n^\ell} \frK_\cT(\{\gama_{\ver}\})\, S_\cT(\{\gama_\ver\}),
\label{En0new2}
\ee	
where
\be 
S_\cT(\{\gama_\ver\})=
\sum_{\cJ\subseteq E_\cT}e_{\cT_\cJ}
\,\prod_{e\in \cJ}\delta_{\Gamma_e}
\prod_{e\in E_\cT\setminus \cJ} \sgn (\Gamma_e)
\label{exprS}
\ee 
and $e_{\cT}$ are some rational numbers depending only on topology of $\cT$.
\end{conj}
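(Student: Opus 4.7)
The plan is to establish Conjecture \ref{conj-eT} by showing that the irrational coefficients $\esf_\cT(\{\bfv_e\})$ appearing in \eqref{exprcEf-new} can be replaced, under the sum over labelled trees weighted by $\frK_\cT$, by rational numbers $e_\cT$ depending only on the topology of $\cT$. The basic mechanism is that the identity \eqref{esfT-T123id}, combined with the antisymmetry of the Dirac pairing $\gamma_{s(e)t(e)}$ entering $\frK_\cT$, produces systematic cancellations that eliminate the continuous dependence of $\esf_\cT$ on the vectors $\bfv_e$ and leave only topological data.

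First, I would pin down $e_\cT$ explicitly in small cases. Since $\Phi_n^E$ is odd, $\esf_\cT=0$ whenever $|E_\cT|$ is even, so we must set $e_\cT=0$ there as well. For odd $|E_\cT|$, the footnote identifying $\esf_\cT = 1/n_\cT$ for the linear chain indicates that distinguished rational values already appear and strongly suggests that $e_\cT$ is obtained by a recursion parallel to \eqref{res-aT}, of the schematic form
\begin{equation}
e_\cT = \frac{1}{n_\cT}\sum_{\ver\in V_\cT} (\pm 1)^{n_\ver^+}\prod_{s=1}^{n_\ver} e_{\cT_s(\ver)},
\nonumber
\end{equation}
with a sign rule to be fixed by comparison with \eqref{exprcEf-new} at $n=2,3,4,5$. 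These low cases would both verify rationality and fix the recursion uniquely.

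Next, having a candidate $e_\cT$, the proof that \eqref{En0new2} agrees with \eqref{exprcEf-new} would proceed by induction on $n$. The inductive step uses the identity \eqref{esfT-T123id} applied at each edge: for three labelled trees $\hcT_1,\hcT_2,\hcT_3$ sharing the same pair of subtrees $\cT_r$ but differing by the surgery of Fig.~\ref{fig-Vign3}, the combination $\frK_{\hcT_1}+\frK_{\hcT_2}-\frK_{\hcT_3}$ reduces, via a three-term identity for Dirac products, to $-\frK_\cT$ times a factor that reorganizes the remaining $\sgn$ and $\delta$ structure consistently with \eqref{exprS}. Iterating this collapse across all edges converts the $\esf$-recursion into the purely rational $e$-recursion and pushes all irrational pieces to boundary terms killed by the induction hypothesis.

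The principal obstacle is bookkeeping: one must track how subsets $\cJ\subseteq E_\cT$ and the contracted trees $\cT_\cJ$ transform under the three-tree surgery, ensuring that the $\delta_{\Gamma_e}$ constraints match across $\hcT_1,\hcT_2,\hcT_3$ and that the net coefficient assigned to a given topology $\cT_\cJ$ ends up depending on $\cT_\cJ$ alone. A secondary difficulty is the handling of degenerate labellings where several Dirac products vanish simultaneously: the identity \eqref{esfT-T123id} is stated at the fully degenerate point $\bfx=0$, so its use at partial degenerations requires a careful limiting argument based on Proposition \ref{prop-largex} and the representation \eqref{cEf-subset} for $\cEf_n$. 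Once the recursion for $e_\cT$ is closed and its rationality is manifest from the recursive definition, the conjecture follows.
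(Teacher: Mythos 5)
Your proposal treats as routine exactly the step that the paper itself cannot establish: this statement is a \emph{conjecture} in the paper, verified analytically only for $n\leq 5$ (and numerically at $n=7$), and your inductive mechanism does not close the gap. The claim that the surgery identity \eqref{esfT-T123id}, combined with antisymmetry of the Dirac products, makes $\frK_{\hcT_1}+\frK_{\hcT_2}-\frK_{\hcT_3}$ collapse to $-\frK_\cT$ times a factor compatible with \eqref{exprS} is not true as stated: the relations among the $\frK_\cT$ hold only modulo the $\delta_{\Gamma_e}$ constraints (the ``$\stackrel{\delta}{=}$'' relations such as \eqref{id-frK-StarFork}--\eqref{id-frK-LinFork}), and even at $n=5$ exploiting them requires the delta-constrained identities \eqref{id-frK-ForkFork}, an orbit decomposition under $C_5\cup\iota$, inversion of the $12\times 12$ matrix \eqref{matrix}, and a final check (done in the paper by computer) that the residual combination of $\esf_{\cT_{5,k}}$'s is a linear combination of the identities \eqref{ident-Phi5}, \eqref{cT5-PhiE-ident123}. ``Iterating this collapse across all edges'' is precisely the missing combinatorial content, not a bookkeeping issue; the paper explicitly refrains from $n\ge 7$ because the analogous manipulations become intractable, so an induction on $n$ whose step is asserted rather than proved does not constitute a proof.

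Your candidate recursion for $e_\cT$ is also off target. The paper does not obtain $e_\cT$ from a vertex-removal recursion parallel to \eqref{res-aT}; it derives the values from the vanishing property \eqref{vanish-cE0} on charge configurations with all $\Gamma_e=0$ (which itself rests on Proposition \ref{prop-vanishG} and a conjectural formula from \cite{Alexandrov:2018lgp}), leading to the structurally different recursion \eqref{res-eT}, in which one sums over decompositions of $\cT$ into subtrees, contracts them, and weights by products of the $a_{\cT_k}$. Moreover, the paper stresses that the $e_\cT$ are not even uniquely determined: under the delta constraints the factors $\frK_\cT$ are linearly dependent, so at $n=5$ only the combination $e_{\cT_{5,2}}-\tfrac14\, e_{\cT_{5,3}}=-\tfrac{1}{20}$ is fixed. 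A recursion of your proposed form, pinned down by matching low orders, would presuppose a uniqueness that does not hold, and there is no argument that its output matches the table of appendix \ref{ap-coef} (note $e_{\cT_{3,1}}=\tfrac13$ versus $a_{\cT_{3,1}}=-\tfrac13$, $e_{\cT_{5,3}}=\tfrac{7}{15}$ versus $a_{\cT_{5,3}}=\tfrac15$, etc.). In short: the low-$n$ verification you sketch mirrors what the paper actually does, but the general inductive step and the recursion for $e_\cT$ are unsupported, so the proposal does not prove the conjecture.
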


It is clear that if this conjecture is true then a similar replacement of $\Ssf_\cT$ by $S_\cT$ 
can also be made in \eqref{En0new}.
In appendix \ref{ap-rational} we provide an evidence in favor of this conjecture by explicitly 
analyzing the cases with $n\leq 5$. In all these cases the equality of \eqref{exprcEf-new} and \eqref{En0new2} 
follows from a set of remarkable identities satisfied by the generalized error functions (see \S\ref{subsec-identPhi})
and hence by the coefficients $\esf_\cT$.
The simplest of them holds for the tree of trivial topology
$\cT=\bullet\!\mbox{---}\!\bullet\!\mbox{--}\cdots \mbox{--}\!\bullet\!\mbox{---}\!\bullet\,$
with odd $n_\cT$
and takes the following form
\be 
\sum_{\rm cyclic\ perm.}\esf_\cT(\{\bfv_e\})=1,
\label{ident-eT}
\ee
where the sum goes over the cyclic permutations of charges assigned to the vertices of $\cT$.
This identity implies that for this tree $e_\cT=1/n_\cT$, consistently with the solution found in \cite{Alexandrov:2019rth}
in the refined case where the tree of linear topology is the only one contributing to the refined
analogue of $\cE_n$. 

For practical purposes, Conjecture \ref{conj-eT} is useless until one finds concrete values of the coefficients $e_\cT$
for all unrooted trees. In fact, these coefficients are not even unambiguously defined because, as shown in 
appendix \ref{ap-rational}, different terms in \eqref{En0new2} are not independent. 
Nevertheless, as we will now explain, there is a way to determine (a possible set of) their values.
To this end, we first prove 

\begin{proposition}
\label{prop-vanishG}
If $\Gamma_e=0$ for $\forall e\in E_\cT $, then this is true for any unrooted tree with the same number of vertices
labelled by the same set of charges,
i.e. $\Gamma_e=0$ for $\forall e\in E_{\cT'} $ and $\forall \cT'$ such that
$n_{\cT'}=n_\cT$.
\end{proposition}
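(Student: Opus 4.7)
The plan is to observe that $\Gamma_e$ depends only on the bipartition of the vertex set $V_\cT=\{1,\dots,n\}$ induced by removing $e$, not on the tree structure, and that every such quantity is just a partial sum of per-vertex numbers. For $A\subseteq V_\cT$ set
\[
\Gamma(A) := \sum_{i\in A,\, j\notin A}\gamma_{ij},\qquad g_i:=\sum_{j=1}^{n}\gamma_{ij}.
\]
Antisymmetry of $\gamma_{ij}$ gives $\sum_{i,j\in A}\gamma_{ij}=0$, whence
\[
\Gamma(A)\;=\;\sum_{i\in A}\sum_{j=1}^{n}\gamma_{ij}-\sum_{i,j\in A}\gamma_{ij}\;=\;\sum_{i\in A} g_i,
\]
and in particular $\Gamma_e=\sum_{i\in V_{\cT_e^s}} g_i$. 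Hence the proposition will follow from the strictly stronger claim that the hypothesis $\Gamma_e=0$ for all $e\in E_\cT$ forces $g_i=0$ for every $i$: once this is known, any edge $e'$ of any other tree $\cT'$ on the same labelled vertex set induces some partition $A_{e'}$, and $\Gamma_{e'}=\sum_{i\in A_{e'}} g_i=0$ regardless of which partition arises.

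I would establish the vanishing of all $g_i$ by induction on $n_\cT$ via leaf removal. Pick a leaf $i_0$ of $\cT$; the edge incident to $i_0$ induces the partition $\{i_0\}\sqcup(V_\cT\setminus\{i_0\})$, so immediately $g_{i_0}=0$. Let $\cT''$ be the tree on $V_\cT\setminus\{i_0\}$ obtained by deleting $i_0$. Every edge $e''\in E_{\cT''}$ is also an edge of $\cT$, and the partition it induces in $\cT$ places $i_0$ on some specific side $A$; then
\[
\Gamma_{e''}^{\cT''}\;=\;\sum_{k\in A\setminus\{i_0\}} g_k\;=\;\Gamma_{e''}^{\cT}-g_{i_0}\;=\;0,
\]
so the hypothesis transfers intact to $\cT''$, and the induction hypothesis forces $g_k=0$ for all $k\neq i_0$.

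I do not anticipate any genuine obstacle: the whole argument rests on the single observation that antisymmetry collapses $\Gamma(A)$ to the linear functional $A\mapsto\sum_{i\in A} g_i$, after which both the leaf-peeling induction and the passage from $\cT$ to arbitrary $\cT'$ are essentially bookkeeping. As a byproduct one obtains the strictly stronger conclusion $\Gamma(A)=0$ for \emph{every} subset $A\subseteq V_\cT$, not only for edge-induced bipartitions, which is exactly the form in which such vanishings are plugged into the identities among the coefficients $\esf_\cT$ elsewhere in the paper.
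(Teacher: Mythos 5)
Your proposal is correct, but it takes a genuinely different route from the paper's. The paper works in the formal vector space spanned by the $\gamma_{ij}$, singles out the star tree as a reference, proves the incidence-matrix identity \eqref{eqGG} expressing each $\Gamma'_{e_i}=\sum_j\gamma_{ij}$ as a signed sum of the $\Gamma_e$ of an arbitrary tree over the edges incident to the vertex $\ver_i$, and then propagates the vanishing to all other trees through a span/dimension argument, which leans on the (true but not proven there) linear independence of the $\Gamma_e$ for each tree. You instead go in the ``forward'' direction: antisymmetry collapses $\Gamma_e$ to the partial sum $\sum_{i\in A}g_i$ of the row sums $g_i=\sum_j\gamma_{ij}$ (which are exactly the star tree's edge quantities in the paper), you invert this relation by leaf-peeling induction to conclude $g_i=0$, and then the edge quantities of any other tree --- in fact $\Gamma(A)$ for any subset $A$ --- vanish immediately, with no appeal to linear independence or dimension counting. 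Your route is more elementary, gives the slightly stronger conclusion $\Gamma(A)=0$ for all $A\subseteq V_\cT$, and avoids the unproven independence assertion; the paper's route buys an explicit closed-form inversion, \eqref{eqGG}, with no induction. One point you should phrase carefully when writing this up: the induction statement must be about the fixed vector $(g_i)$ (equivalently, about an arbitrary antisymmetric matrix with these row sums), because the genuine Dirac-product quantities of the reduced charge system $\{\gama_k\}_{k\neq i_0}$ are $g_k-\gamma_{k i_0}$, not $g_k$; your displayed identity is the correct one precisely because you define $\Gamma^{\cT''}_{e''}$ as a partial sum of the \emph{original} $g_k$, and the auxiliary condition $\sum_{k\neq i_0}g_k=0$ needed for the base case survives the peeling because $g_{i_0}=0$.
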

\begin{proof}
Let us consider unrooted trees with $n$ vertices and think about $\gamma_{ij}$ as basis vectors in 
a $\hf n(n-1)$-dimensional vector space. Then for each tree $\cT\in\, \IT_{n}^\ell$, the set of $\Gamma_e$ 
spans a $(n-1)$-dimensional subspace. What we need to show is that this subspace is the same for all trees.
In turn, this would follow if for any pair of trees, $\cT$ and $\cT'$, all $\Gamma'_e$ can be expressed 
as linear combinations of $\Gamma_e$. Moreover, due to the linear independence of these vectors for each tree,
it is enough to show this for a fixed tree $\cT'$.

\lfig{A vertex and adjacent subtrees.}
{tree-vi}{6cm}{fig-tree-vi}{-0.7cm}

Let us choose $\cT'$ to be the tree where vertex $\ver_n$ is connected to all other vertices.
(For example, in the table of appendix \ref{ap-coef}, these are the trees $\cT_{3,1}$, $\cT_{5,3}$ and $\cT_{7,8}$.) 
For such tree the edges can be labelled by the index of the vertex connected by the edge to $\ver_n$
and we have $\Gamma'_{e_i}=\sum_{j=1}^n \gamma_{ij}$, $i=1,\dots,n-1$.
On the other hand, for any other unrooted labelled tree $\cT$, let 
\be
M_{ie}=\left\{\begin{array}{ll}
	1 &\quad \mbox{if } \ver_i=s(e),
	\\ 
	-1 &\quad \mbox{if } \ver_i=t(e),
	\\
	0 &\quad \mbox{if } \ver_i\notin e.
\end{array} \right.
\ee
Then it is easy to realize that 
\be 
\Gamma'_{e_i}=\sum_{e\in E_\cT} M_{ie}\, \Gamma_e.
\label{eqGG}
\ee
Indeed, let us denote the edges adjacent to $\ver_i$ by $e_r$, $r=1,\dots, m=n_{\ver_i}$, and the subtree
joint to $\ver_i$ via $e_r$ by $\cT_r$ (see Fig. \ref{fig-tree-vi}).
With these notations, the definitions of $\Gamma_e$ \eqref{defGame} and the matrix $M_{ie}$ imply that
\be 
\sum_{e\in E_\cT} M_{ie}\, \Gamma_e=\sum_{r=1}^m \sum_{j\in V_{\cT_r}}
\Biggl(\gamma_{ij}+\sum_{s=1\atop s\ne r}^m\sum_{k\in V_{\cT_s}}\gamma_{kj}\Biggr).
\ee
The second term on the r.h.s. vanishes because $\gamma_{kj}$ is anti-symmetric, while the first equals $\Gamma'_{e_i}$. 
This proves \eqref{eqGG}, from which the Proposition follows according to the above reasoning.
\end{proof}

Due to this proposition, the condition of vanishing of all $\Gamma_e$ is the same for all trees
with a given number of vertices. Let us denote by $\gama_i^\star$ the charges satisfying this condition.
We claim that 
\be 
\cEf_n(\{\gama_i^\star\})=0.
\label{vanish-cE0}
\ee
This fact is a consequence of a (conjectural) formula (5.47) from \cite{Alexandrov:2018lgp}
for $\cEf_n$. In this representation each product of sign functions is multiplied by a factor
\be
\sum_{T}d_T\,\kappa(T),
\qquad
\kappa(T) =(-1)^{n-1} \prod_{v\in V_T}  (-1)^{\gamma_{\Lv{v}\Rv{v}}} \gamma_{\Lv{v}\Rv{v}},
\label{PPn}
\ee
where $d_T$ are some numerical coefficients and the sum goes over a subset of flow trees $T$ with $n$ leaves,
which are rooted binary trees with vertices decorated by charges $\gamma_v$,
such that the leaves of the tree carry charges $\gamma_i$, and
the charges propagate along the tree according to $\gamma_v= \gamma_{\Lv{v}}+\gamma_{\Rv{v}}$
at each vertex, where $\Lv{v}$, $\Rv{v}$ are the two children of the vertex $v$.
It is clear that the factor corresponding to the root vertex of a tree $T$ in the product $\kappa(T)$
coincides with one of $\Gamma_e$'s and hence all terms vanish for our choice of charges.

Applying the vanishing condition \eqref{vanish-cE0} to \eqref{En0new} with $\Ssf_\cT$ replaced by $S_\cT$ gives
\be
\sum_{m=0}^{[(n-1)/2]}\sum_{\cT\in\, \IT_{n-2m,m}^\ell}e_\cT\frK_\cT(\{\gama_{\ver}^\star\})
\prod_{\ver\in V_\cT}\cV_{m_\ver}(\{\gama_{\ver,s}^\star\})=0.
\label{En0new2-vanish}
\ee	
Substituting \eqref{deftcV-mw}, this can be rearranged as 
\be
\sum_{\cT\in\, \IT_{n}^\ell}
\(\sum_{\smash{\mathop{\cup}\limits_{k=1}^m\cT_k \simeq\cT }}\,
e_{\cT/\{\cT_k\}}
\prod_{k=1}^m a_{\cT_k}\)
\frK_\cT(\{\gama_{\ver}^\star\})=0,
\label{En0new2-vanish2}
\ee	
where the second sum runs over all decompositions of $\cT$ into a set of 
non-intersecting subtrees\footnote{We use the sign $\simeq$ because a tree contains more information 
than the union of its subtrees.}
and $\cT/\{\cT_k\}$ denotes the tree obtained from $\cT$ by collapsing each subtree $\cT_k$ to a single vertex.
Although the factors $\frK_\cT(\{\gama_{\ver}^\star\})$ are not all linearly independent,
the condition \eqref{En0new2-vanish2} suggests a simple way to fix all the coefficients $e_\cT$ 
by equating to zero each term in the sum. Then, taking into account that for $m=n_\cT$ one has $\cT_k=\bullet$ and
$\cT/\{\cT_k\}=\cT$, one obtains the following recursive formula
\be 
e_\cT=-\sum_{m=1}^{n_\cT-1}\sum_{\smash{\mathop{\cup}\limits_{k=1}^m\cT_k \simeq\cT }}\,
e_{\cT/\{\cT_k\}}\prod_{k=1}^m a_{\cT_k}.
\label{res-eT}
\ee 
It is easy to see that. similarly to $a_\cT$, the coefficients vanish for even number of vertices.
For trees with $n_\cT\leq 7$ we collected them in a table in appendix \ref{ap-coef}
and the resulting values are consistent with the analysis in appendix \ref{ap-rational}.
It is worth noting also that the coefficients $e_\cT$ satisfy a relation analogous to \eqref{esfT-T123id}:
\be
\label{eT-T123id}
e_{\hat{\cT}_1}+e_{\hat{\cT}_2}-e_{\hat{\cT}_3}=-e_{\cT},
\ee
where the trees are as in Fig. \ref{fig-Vign3}.
We provide its proof in appendix \ref{ap-coef}.

\section{Simplified anomaly}
\label{ap-exprR}

In this section we show that the expression for the modular completion reviewed in the previous section
can be significantly simplified. While the equations \eqref{exp-whh} and \eqref{solRn} essentially remain intact,
the simplification happens at the level of their building blocks given by the functions $\cE_n$, which get replaced by 
simpler objects which we call $\Ev_n$. 
More precisely, we claim that the functions $R_n$ \eqref{solRn} can be rewritten as 
\be
R_n(\{\gama_i\};\tau_2)=\Sym\Biggl\{\frac{(-1)^{\sum_{i<j} \gamma_{ij} }}{2^{n-1}} \sum_{T\in\IT_n^{\rm S}}(-1)^{n_T-1} 
\Ep_{v_0}\prod_{v\in V_T\setminus{\{v_0\}}}\Ef_{v}\Biggr\},
\label{solRn-new}
\ee
where $\Ef_n$ and $\Ep_n$ are the constant and exponentially suppressed terms in a decomposition of 
the new functions $\Ev_n$,
\be
\Ev_n(\{\gama_i\};\tau_2)=\Ef_n(\{\gama_i\})+\Ep_n(\{\gama_i\};\tau_2),
\label{twocEs-new}
\ee
These functions are defined, similarly to \eqref{rescEnPhi}, as 
\be
\Ev_n(\{\gama_i\};\tau_2)=
\frac{\EPhi_n(\bfx)}{(\sqrt{2\tau_2})^{n-1}}\, ,
\label{rescEnPhi-new}
\ee
where $\EPhi_n(\bfx)$ replace $\cEPhi_n(\bfx)$ and take the following form
\be
\EPhi_n(\bfx)=
\frac{1}{n!}\sum_{\cT\in\, \IT_n^\ell}
\[\prod_{e\in E_\cT} \cD(\bfv_{s(e) t(e)},\bfy)\]
\Phi^E_{\cT}(\bfx)\Bigr|_{\bfy=\bfx},
\label{rescEn-new}
\ee
written in terms of a differential operator generalizing the one in \eqref{defcD} and given by
\be
\cD(\bfv,\bfy)=\bfv\cdot\(\bfy+\frac{1}{2\pi}\,\p_\bfx\).
\label{defcDif}
\ee
Furthermore, the constant part of $\Ev_n$ coincides with the function $\Ef_n$ \eqref{En0new2}.

Comparing \eqref{rescEn-new} and \eqref{En0new2} to \eqref{rescEn} and \eqref{cEf-subset}, 
we observe that the simplification consists in dropping all contributions generated by trees with 
non-vanishing number of marks and by the mutual action of the derivative operators $\cD(\bfv)$.
The latter is achieved by the use of the operators \eqref{defcDif} which do not act on each other in 
contrast to the original operators $\cD(\bfv)$. The price to pay for this is that the functions 
$\EPhi_n(\bfx)$ are {\it not} eigenfunctions of the Vign\'eras operator \eqref{Vigdif} and 
therefore appear to spoil modularity. However, the claim is that  
all anomalies are cancelled in the combinations relevant for the completion.

The proof of the representation \eqref{solRn-new} proceeds in two steps.
At the first step we note that the functions $\cE_n$ \eqref{rescEnPhi} can be expressed as follows
\be
\begin{split}
\cE_n(\{\gama_i\};\tau_2)=&\, 	\frac{1}{n!}\sum_{m=0}^{[(n-1)/2]}
\frac{1}{(\sqrt{2\tau_2})^{n-2m-1}} 
\\
&\times
\sum_{\cT\in\, \IT_{n-2m,m}^\ell}
\[ \prod_{\ver \in V_\cT} \cV_{m_\ver}(\{\hgam_{\ver,s}\})\prod_{e\in E_\cT}\cD(\bfv_{s(e) t(e)},\bfy) \]
\Phi_{\cT}^E(\bfx)\Bigr|_{\bfy=\bfx}\, ,
\end{split}
\label{gen-oldident}
\ee
where, as usual, $\bfx$ is a vector with components $x^a_i=\sqrt{2\tau_2}\, \kappa_i^{ab}q_{i,b}$.
This relation generalizes the one in \eqref{En0new} to finite $\tau_2$.
Comparing to the original representation using the functions $\cEPhi_n(\bfx)$ \eqref{rescEn}, there are two changes:
the functions $\cD_{m}$ (see the comment below \eqref{defcDcT}) are replaced by $(2\tau_2)^m\cV_{m}$, and 
the operators $\cD(\bfv)$ are replaced by $\cD(\bfv,\bfy)$.
The proof of this statement is completely analogous to the proof of Proposition 5 in \cite{Alexandrov:2018lgp}
where instead of \cite[Eq.(F.20)]{Alexandrov:2018lgp} one should use the identity \eqref{PhiT-T123id} and 
make the replacement
\be
\prod_{e\in E_{\cT}}(\bfv_{s(e) t(e)},\bfx)\,\sgn(\bfv_e,\bfx)
\ \longrightarrow\ 
\[\prod_{e\in E_\cT}\cD(\bfv_{s(e) t(e)},\bfy) \]
\Phi_{\cT}^E(\bfx)\Bigr|_{\bfy=\bfx}.
\ee

\lfig{Combination of two Schr\"oder trees ensuring the cancellation of contributions generated by marked trees.}
{Mcancel}{17.cm}{fig-Mcancel}{-1.5cm}

The second step is essentially identical to the proof of Proposition 10 in \cite{Alexandrov:2018lgp}.
Namely, let us consider the original representation of $R_n$ \eqref{solRn} 
and pick up the contribution generated by a non-trivial marked tree $\cT$, 
i.e. a tree having more than one vertex and at least one mark, 
which appears in the sum over marked unrooted trees producing the factor assigned
to a vertex $v$ of a Schr\"oder tree $T$. 
We denote $k=n_v$ the number of children of the vertex $v\in V_T$ and $\gamma_i$ ($i=1,\dots,k$) their charges.
Let us focus on a vertex $\ver\in V_\cT$ with $m_\ver>0$ marks and take 
$\gamma_s$ ($s=1,\dots,2m_\ver+1$) to be the charges labelling this vertex, 
so that its weight to our contribution, due to \eqref{En0new} and \eqref{gen-oldident}, 
is given by $\cV_{m_\ver}(\{\hgam_{s}\})$.
Note that $k\ge 2m_\ver+2$ because the tree $\cT$ 
has at least one additional vertex except $\ver$.
Then it is easy to see that the contribution we described is exactly cancelled 
by the contribution coming from another Schr\"oder tree,
which is obtained from $T$ by adding an edge
connecting the vertex $v$ to a new vertex $v'$, whose children are the $2m_\ver+1$ children of $v$ in $T$ 
carrying charges $\gamma_s$ (see Fig. \ref{fig-Mcancel}).\footnote{The new tree is of Schr\"oder 
	type because its vertex $v$ has $k-2m_{\ver}\ge 2$ children and vertex $v'$ has $2m_\ver+1\ge 3$ children.}
Indeed, choosing in the sum over marked trees at vertex $v$ a tree $\cT$ which is the same as before except
that now it has 0 marks at vertex $\ver$, and
in the sum at vertex $v'$ the trivial tree having one vertex and $m_\ver$ marks, 
one gets exactly the same contribution as before,
but now with an opposite sign due to the presence of an additional vertex in the Schr\"oder tree.
Thus, all contributions from non-trivial marked trees are cancelled.

As a result, we remain only with the contributions generated by trivial marked trees, 
i.e. having only one vertex and $m_\ver$ marks.
One has to distinguish two cases: either the corresponding vertex $v$ of the Schr\"oder tree is the root or not.
In the former case, this contribution is trivially cancelled in the difference $\cE_{v_0}-\cEf_{v_0}=\cEp_{v_0}$ 
assigned to $v_0$ in \eqref{solRn}.
In the latter case, there are again two possibilities: whether the vertex $\ver$ of the unrooted tree $\cT$ assigned to
the parent of $v$, which carries the charge $\gamma_v$, has marks or not.
If not, this is precisely the contribution used above to cancel the contributions
from non-trivial marked trees. If $m_\ver>0$, then $\cT$ must be a trivial marked tree because 
the contributions corresponding to non-trivial ones have already been cancelled. 
But then we can apply the same argument to this parent vertex. 
As a result, we continue in this way up to the root, but  
we already know that its weight $\cEp_{v_0}$ does not contain the contributions of trivial marked trees.
Thus, all contributions generated by unrooted trees in \eqref{En0new} and \eqref{gen-oldident}  
with at least one mark cancel and we arrive at the formula \eqref{solRn-new}.

\section{Conclusions}
\label{sec-concl}

The main result of this paper is a simplified expression for the modular completion of generating series 
$h_{p,\mu}(\tau)$ of D4-D2-D0 BPS indices
which are identical to rank 0 DT invariants of a CY threefold.
This completion, which determines the modular anomaly of $h_{p,\mu}(\tau)$, is characterized by functions 
$R_n(\{\gama_i\},\tau_2)$ given by a sum over Schr\"oder trees $\IT_n^{\rm S}$. 
Before simplifications the weights $\cE_n$ 
assigned to vertices of these trees were given themselves by a sum over marked unrooted labelled trees $\IT_{n,m}^\ell$,
with the weights assigned to the marks also given by a sum over unrooted labelled trees $\IT_{n}^\ell$.
After the simplification we remained only with a single sum over $\IT_{n}^\ell$, 
i.e. all contributions of marks have been cancelled, which significantly simplifies the final expression
and makes it much easier to perform formal manipulations with the anomaly.   

The equations determining the simplified completion are \eqref{exp-whh}, \eqref{solRn-new}, \eqref{rescEnPhi-new} 
and \eqref{rescEn-new}. Sometimes it is convenient to split the sum over reduced charges $\gama_i$ 
appearing in \eqref{exp-whh} into sums over their magnetic and electric components, and to move the symmetrization
over charges and the sign factor, appearing in \eqref{solRn-new} together with the sum over Schr\"oder trees, 
to the sum over electric charges.
Therefore, we give here an alternative version of the expression for the modular completion:
\be
\whh_{p,\mu}(\tau)=h_{r,\mu}(\tau)+ \sum_{n=2}^{n_{\rm max}}\sum_{\sum_{i=1}^n p_i=p}
\sum_{\{\mu_i\}}
\rmRi{\{p_i\}}_{\mu,\{\mu_i\}}(\tau, \btau)
\prod_{i=1}^n h_{p_i,\mu_i}(\tau),
\label{exp-whh-new}
\ee
where 
\be
\rmRi{\{p_i\}}_{\mu,\{\mu_i\}}(\tau, \btau)=
\sum_{\sum_{i=1}^n q_i=\mu+\hf p } 
\Sym\Bigl\{(-1)^{\sum_{i<j} \gamma_{ij} }\scR_n(\{\gama_i\};\tau_2)\, e^{\pi\I \tau Q_n(\{\gama_i\})}\Bigr\}
\label{defRn}
\ee
and
\be
\scR_n(\{\gama_i\};\tau_2)=\frac{1}{2^{n-1}} \sum_{T\in\IT_n^{\rm S}}(-1)^{n_T-1} 
\Ep_{v_0}\prod_{v\in V_T\setminus{\{v_0\}}}\Ef_{v} .
\label{solRn-new2}
\ee

Besides, we have found that the previous results on the large $\tau_2$ limit of the functions 
determining the weights assigned to vertices of Schr\"oder trees in \eqref{solRn-new2} 
must be corrected as they were missing
contributions from certain degenerate charge configurations. The correct limit can be found in \eqref{En0new2}
and is given in terms of rational coefficients $e_\cT$ assigned to unrooted trees.
Although we formulated it as a conjecture, we have verified it analytically for $n\leq 5$ 
and also did some numerical tests for $n=7$. For this purpose, we have derived new interesting identities between
the generalized error functions, which are the main building blocks of our construction (see \S\ref{subsec-identPhi}). 
It appears that all these non-trivial identities
are consequences of a single identity \eqref{identPhiE}.

These results are of direct interest for any attempt to find explicit expressions 
of the generating functions $h_{p,\mu}$, especially when magnetic charges are not small, so that
the depth of mock modularity is not 0 or 1, and one cannot rely on generating functions of {\it refined}
invariants, as is the case of {\it compact} CY threefolds
where the precise mathematical definition of refined DT invariants remains obscure.
In such cases we face the problem of solving the modular anomaly satisfied by $h_{p,\mu}$
and any its simplification might be important for this crucial step.

Finally, understanding the structure of the anomaly is important also for understanding the origin 
of the modularity from the pure mathematical point of view. While in string theory it comes from
S-duality of type IIB string theory, only recently there was an advance in proving it for some simplest CYs and 
for the simplest case of unit magnetic charge \cite{talkSheshmani} where there is no any modular anomaly
(see also \cite{Feyzbakhsh:2020wvm} for some recent discussion).
It is an exciting challenge to reveal, without referring to the physics of string theory, 
why more general rank 0 DT invariants possess mock modular properties.

\section*{Acknowledgements}

The authors are grateful to the Abdus Salam International Centre for Theoretical Physics 
and the Galileo Galilei Institute for Theoretical Physics for the hospitality 
as well as the INFN for partial support during the completion of this work.

\appendix

\section{Generalized error functions}
\label{ap-generr}

The generalized error functions have been
introduced in \cite{Alexandrov:2016enp,Nazaroglu:2016lmr} (see also \cite{kudla2016theta}).
They are defined by
\be
E_n(\cM;\vu)= \int_{\IR^n} \de \vu' \, e^{-\pi\sum_{i=1}^n(u_i-u'_i)^2} \sign(\cM^{\rm tr} \vu')\, ,
\label{generr-E}
\ee
where $\vu=(u_1,\dots,u_n)$ is $n$-dimensional vector, $\cM$ is $n\times n$ matrix of 
parameters\footnote{The information carried by $\cM$ is in fact highly redundant. 
For example, for $n=1$ the dependence on $\cM$ drops out, whereas
	$E_2$ and $E_3$ depend only on 1 and 3 parameters, respectively.},
and we used the shorthand notation $\sgn(\vu)=\prod_{i=1}^n\sgn(u_i)$.
They generalize the usual error function because at $n=1$ one has $E_1(u)=\Erf(\sqrt{\pi}\, u)$.
Their main role is to provide modular completions of indefinite theta series with quadratic form
of an indefinite signature.

To get kernels of indefinite theta series, we need however functions depending on 
a $d$ dimensional vector with $d\geq n$, rather than $n$-dimensional one,
where $d$ plays the role of the dimension of a lattice and can be arbitrarily large. 
To define such functions, let $\IR^d$ be endowed with a bilinear form $\bfx\cdot\bfy$
of signature $(n_+,n_-)$. Then for any\footnote{In the conventions used to define the generalized error functions 
\cite{Alexandrov:2016enp,Nazaroglu:2016lmr} and which we follow here, the quadratic form appears in theta series with 
an additional minus, $e^{-\pi\I\tau Q(\bfk)}$, so the standard convergent theta series correspond to 
the negative definite signature $(0,d)$.} 
$n\leq n_+$ we take
$\cV$ to be a $d\times n$ matrix which can be viewed as a collection of $n$ vectors, $\cV=(\bfv_1,\dots,\bfv_n)$, 
assumed to span a positive definite subspace in $\IR^d$,
i.e. $\cV^{\rm tr}\cdot\cV$ must be positive definite. We also introduce a $n\times d$ matrix
$\cB$ whose rows define an orthonormal basis for this subspace.
Then we define a boosted version of generalized error functions
\be
\Phi_n^E(\cV;\bfx)=E_n(\cB\cdot \cV;\cB\cdot\, \bfx).
\label{generrPhiME}
\ee
The detailed properties of these functions can be found in \cite{Nazaroglu:2016lmr}.
Most importantly, they do not depend on $\cB$,
at generic\footnote{This specification is important because, as discussed below in \$\ref{subsec-limitPhi}, 
	if the limit is taken in the direction
	where one or several scalar products $\bfv_i\cdot\bfx$ vanish, the result is more subtle.} 
large $\bfx$ reduce to
$\prod_{i=1}^n \sgn (\bfv_i\cdot\bfx)$,
are odd/even for odd/even $n$,
\be 
\Phi_n^E(\cV;-\bfx)=(-1)^n\Phi_n^E(\cV;\bfx),
\label{oddPhi}
\ee 
and annihilated by the Vign\'eras operator 
\be
\label{Vigdif}
V_0\cdot \Phi_n^E(\cV;\bfx)=0,
\qquad
V_\lambda= \p_{\bfx}^2   + 2\pi \( \bfx\cdot \p_{\bfx}  - \lambda\)  .
\ee
The latter property ensures that any linear combination of the boosted generalized error functions
belonging to $L^2(\IR^d)$ gives rise to an indefinite theta series which transforms as 
a modular form \cite{Vigneras:1977}.
Thus, to construct a completion of a holomorphic theta series with a kernel 
given by a linear combination of products of sign functions,
it is sufficient to replace each of these products,
which all have the form $\prod_{i=1}^n \sgn (\bfv_i\cdot\bfx)$
with some vectors $\bfv_i$, by $\Phi_n^E(\{\bfv_i\};\bfx)$ \cite{Alexandrov:2016enp}.

An important role is played also by the operator
\be
\cD(\bfv)=\bfv\cdot\(\bfx+\frac{1}{2\pi}\,\p_\bfx\).
\label{defcD}
\ee
It commutes with the Vign\'eras operator in the following sense
\be
\cD(\bfv) V_\lambda=V_{\lambda+1}\cD(\bfv),
\ee
and thus it allows to construct new modular theta series. 
In particular, one can act by such operators on the generalized error functions
and the result will still define an indefinite theta series transforming as a modular form.

\subsection{The limit of large argument}
\label{subsec-limitPhi}

Here we provide a precise result on the large $\bfx$ limit of the generalized error functions:
\begin{proposition}
	\label{prop-largex}
	\be
	\lim_{\lambda\to\infty}\Phi_{n}^E\(\{\bfv_i\};\lambda\bfx\)=
	\sum_{\cJ\subseteq \Zv_{n}}
	\Phi_{|\cJ|}^E(\{\bfv_k\}_{k\in\cJ};0)\,\prod_{k\in \cJ}\delta_{\bfv_k\cdot\bfx}
	\prod_{i\in \Zv_{n}\setminus \cJ} \sgn (\bfv_i\cdot\bfx),
	\label{large-x-PhiE}
	\ee
	where $\Zv_{n}=\{1,\dots,n\}$ and $|\cJ|$ is the cardinality of the set.
\end{proposition}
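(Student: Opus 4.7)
The plan is to return to the integral definition
\[
\Phi_n^E(\{\bfv_i\};\bfx)=E_n(\cB\cdot\cV;\cB\cdot\bfx)=\int_{\IR^n}\de\vu'\, e^{-\pi|\cB\cdot\bfx-\vu'|^2}\,\sgn(\cM^{\rm tr}\vu'),
\]
with $\cM=\cB\cdot\cV$, and analyze it directly in the scaling $\bfx\mapsto\lambda\bfx$. Shifting $\vu'=\lambda\,\cB\cdot\bfx+\vw$ turns the integral into
\[
\Phi_n^E(\{\bfv_i\};\lambda\bfx)=\int_{\IR^n}\de\vw\, e^{-\pi|\vw|^2}\prod_{i=1}^n\sgn\bigl(\lambda\,a_i+(\cM^{\rm tr}\vw)_i\bigr),
\]
where $a_i=(\cM^{\rm tr}\cB\cdot\bfx)_i=\bfv_i\cdot\bfx$ (since $\cB^{\rm tr}\cB$ is the orthogonal projector onto the span of the $\bfv_i$, which contains each $\bfv_i$ itself). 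This rewriting puts the whole $\lambda$-dependence inside a bounded, pointwise well-behaved product of sign functions.

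Next I would take the limit under the integral sign. Setting $\cJ_0=\{i\in\Zv_n:\bfv_i\cdot\bfx=0\}$, one has, for almost every $\vw$, the pointwise convergence
\[
\sgn\bigl(\lambda\,a_i+(\cM^{\rm tr}\vw)_i\bigr)\ \xrightarrow[\lambda\to\infty]{}\ \begin{cases}\sgn(\bfv_i\cdot\bfx),& i\notin\cJ_0,\\ \sgn\bigl((\cM^{\rm tr}\vw)_i\bigr),& i\in\cJ_0,\end{cases}
\]
the second line being independent of $\lambda$. Since the integrand is dominated by $e^{-\pi|\vw|^2}$, dominated convergence applies, and the limit factorizes as
\[
\prod_{i\notin\cJ_0}\sgn(\bfv_i\cdot\bfx)\ \cdot\ \int_{\IR^n}\de\vw\, e^{-\pi|\vw|^2}\prod_{i\in\cJ_0}\sgn\bigl((\cM^{\rm tr}\vw)_i\bigr).
\]

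The next step is to identify the residual integral with $\Phi_{|\cJ_0|}^E(\{\bfv_k\}_{k\in\cJ_0};0)$. Decompose $\vw$ into components along and perpendicular to the span of the columns $\{\vm_k=(\cB\cdot\bfv_k),\ k\in\cJ_0\}$: the perpendicular Gaussian integrates to one, and the remaining $|\cJ_0|$-dimensional Gaussian integral of a product of signs is exactly $E_{|\cJ_0|}(\cdot\,;0)$, which by basis-independence of the boosted generalized error function equals $\Phi_{|\cJ_0|}^E(\{\bfv_k\}_{k\in\cJ_0};0)$. Finally, I would repackage the single surviving term labeled by $\cJ_0$ as the full sum in \eqref{large-x-PhiE}: if $\cJ$ contains some $i$ with $\bfv_i\cdot\bfx\ne0$ then $\delta_{\bfv_i\cdot\bfx}=0$, and if $\cJ$ misses some $i$ with $\bfv_i\cdot\bfx=0$ then the factor $\sgn(\bfv_i\cdot\bfx)=0$ in the complementary product kills the term; so only $\cJ=\cJ_0$ contributes, reproducing the previously obtained expression.

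The main technical point—really the only place where one has to be careful—is the basis-independence argument that ensures the residual $|\cJ_0|$-dimensional integral equals $\Phi_{|\cJ_0|}^E(\{\bfv_k\}_{k\in\cJ_0};0)$ rather than something that still depends on the vectors $\bfv_i$ with $i\notin\cJ_0$ through the choice of $\cB$. Once one verifies that the orthogonal complement (inside the span of all $\bfv_i$) truly decouples from the sign factors labeled by $\cJ_0$, the remainder of the proof is a straightforward application of dominated convergence and bookkeeping.
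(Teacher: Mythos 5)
Your proof is correct and follows essentially the same route as the paper's: both go back to the integral representation, shift the integration variable by $\lambda$ times the projection of $\bfx$, observe that the sign factors in the non-degenerate directions freeze to $\sgn(\bfv_i\cdot\bfx)$, and reduce the residual Gaussian integral to $\Phi^E_{|\cJ|}(\{\bfv_k\}_{k\in\cJ};0)$, with only the degenerate subset $\cJ$ surviving the bookkeeping over subsets. The difference is purely technical: the paper fixes $\cB=X^{-1/2}\cV^{\rm tr}$ and performs an explicit block decomposition of the Gram matrix, whereas you keep a generic orthonormal frame and rely on dominated convergence together with the basis-independence of the boosted generalized error function.
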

\begin{proof}
	First, we note that one can always choose the matrix $\cB$ in the definition \eqref{generrPhiME}
	as $\cB=X^{-1/2} \cV^{\rm tr}$ where $X=\cV^{\rm tr}\cdot \cV$ is a symmetric positive definite matrix
	and hence has a well defined square root.
	Substituting this into \eqref{generrPhiME} and \eqref{generr-E}, one obtains 
	\be 
	\begin{split}
		\Phi_n^E(\cV;\bfx) =&\, 
		\int_{\IR^n} \de \vu' \, e^{-\pi\(X^{-\hf} \cV^{\rm tr}\bfx-\vu'\)^{\rm tr}\(X^{-\hf} \cV^{\rm tr}\bfx-\vu'\)} 
		\sign(X^{\hf} \vu')
		\\
		=&\, 
		(\det X)^{-1/2} \int_{\IR^n} \de \vu \, e^{-\pi \(\cV^{\rm tr}\bfx-\vu\)^{\rm tr} X^{-1}\(\cV^{\rm tr}\bfx-\vu\)} 
		\sign(\vu),
	\end{split}
	\label{intPhiE}
	\ee
	where at the second step we have done the change of the integration variables $\vu'=X^{-\hf} \vu$.	
	
	Then let $\cJ\subseteq \Zv_{n}$ be a subset such that $\bfv_i\cdot \bfx=0$ for $i\in\cJ$ and does not vanish otherwise.
	We can always reorder the indices such that $\cJ=\Zv_{|\cJ|}$. We will label this set by $a,b,\dots$,
	while the remaining indices will be labelled by $\alpha,\beta,\dots$. Accordingly, we can represent
	the matrix $X$ and its inverse in a block diagonal form
	\be 
	X=\(\begin{array}{cc}
		A & B \\ B^{\rm tr} & C	
	\end{array}\),
	\qquad	
	X^{-1}=\(\begin{array}{cc}
		A^{-1} +A^{-1} B D^{-1} B^{\rm tr} A^{-1} & -A^{-1} B D^{-1} \\ -D^{-1} B^{\rm tr} A^{-1} & D^{-1}	
	\end{array}\),	
	\ee 
	where $A_{ab}=\bfv_a\cdot\bfv_b$, $B_{a\alpha}=\bfv_a\cdot\bfv_\alpha$, $C_{\alpha\beta}=\bfv_\alpha\cdot\bfv_\beta$
	and $D=C-B^{\rm tr} A^{-1} B$.
	Using this decomposition in \eqref{intPhiE}, one obtains 
	\bea
	\Phi_n^E(\cV;\lambda\bfx) &=&
	(\det X)^{-1/2}  
	\int_{\IR^{|\cJ|}} \prod_{a}\de u_a  \, 
	e^{-\pi \sum\limits_{a,b}(A^{-1})_{ab}u_a u_b} \prod_{a}\sign(u_a)
	\\
	&& \times 
	\int_{\IR^{n-|\cJ|}} \prod_{\alpha}\de w_\alpha\, 
	e^{-\pi \sum\limits_{\alpha,\beta}(D^{-1})_{\alpha\beta}(w-B^{\rm tr} A^{-1} \vu)_\alpha(w-B^{\rm tr} A^{-1} \vu)_\beta} 
	\prod_{\alpha}\sign(w_\alpha+\lambda\bfv_\alpha\cdot\bfx), 
	\nn 
	\eea
	where $w_\alpha=u_\alpha-\lambda\bfv_\alpha\cdot\bfx$. For large $\lambda$ the last factor given by the product 
	of sign functions can be approximated by 1. As a result, the integral over $w_\alpha$ becomes Gaussian.
	Taking also into account that $\det X=\det A \det D$, one finds that
	\be 
	\begin{split}
		\lim_{\lambda\to\infty}\Phi_{n}^E\(\{\bfv_i\};\lambda\bfx\)
		=&\,(\det A)^{-1/2}  \prod_{\alpha}\sign(\bfv_\alpha\cdot\bfx) 
		\int_{\IR^{|\cJ|}} \prod_{a}\de u_a  \, 
		e^{-\pi \sum\limits_{a,b=1}(A^{-1})_{ab}u_a u_b}
		\\
		=&\, \Phi_{|\cJ|}^E(\{\bfv_a\};0)\prod_{\alpha}\sign(\bfv_\alpha\cdot\bfx).
	\end{split}
	\ee
	Combining together all possible cases corresponding to different subsets $\cJ$, 
	we arrive at the statement of the Proposition.
\end{proof}

\subsection{Identities}
\label{subsec-identPhi}

The generalized error functions satisfy an important identity, which generalizes the one given in 
\cite[Eq.(D.13)]{Alexandrov:2018lgp}:

\begin{proposition}
\label{prop-identPhiE}
\be
\Phi_{n}^E\(\{\bfv_i+\delta_{i,n}\bfv_{n-1}\};\bfx\)+\Phi_n^E\(\{\bfv_i+\delta_{i,n-1}\bfv_{n}\};\bfx\) 
-\Phi_n^E\(\{\bfv_i\};\bfx\) = \Phi_{n-2}^E\(\{\bfv_i\};\bfx\),
\label{identPhiE}
\ee
where for $n=2$ on the r.h.s. we set by definition $\Phi_0^E=1$.
\end{proposition}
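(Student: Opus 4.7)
The plan is to reduce the identity \eqref{identPhiE} to a one-dimensional sign identity inside the integral representation \eqref{intPhiE} of the generalized error functions. The key elementary identity is
\begin{equation}
\sgn(a)\sgn(a+b)+\sgn(a+b)\sgn(b)-\sgn(a)\sgn(b)=1,
\end{equation}
which holds whenever $a,b,a+b\neq 0$, as can be verified by checking the six relevant sign configurations of $(a,b,a+b)$.

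First I would observe that the three vector configurations $\{\bfv_i\}$, $\{\bfv_i+\delta_{i,n}\bfv_{n-1}\}$, $\{\bfv_i+\delta_{i,n-1}\bfv_n\}$ all have the same linear span inside $\IR^d$, since the shifts are internal linear combinations. Hence, using the fact (stated below \eqref{generrPhiME}) that $\Phi_n^E$ is independent of the auxiliary matrix $\cB$, a common orthonormal basis $\cB$ adapted to $\Span\{\bfv_1,\dots,\bfv_n\}$ may be used for all three terms. Each shift then acts as an elementary column operation $\cM\mapsto\cM R$ on $\cM=\cB\cV$, so that with $a_i=(\cM^T\vu')_i$, the sign function $\sgn(a_n)$ gets replaced by $\sgn(a_{n-1}+a_n)$ in the first term and $\sgn(a_{n-1})$ by $\sgn(a_{n-1}+a_n)$ in the second. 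Applying the elementary identity under the Gaussian integral (the exceptional locus $a_{n-1}=0$ or $a_{n-1}+a_n=0$ has Lebesgue measure zero), the left-hand side collapses to
\begin{equation}
\int_{\IR^n}d\vu'\, e^{-\pi|\cB\bfx-\vu'|^2}\prod_{i=1}^{n-2}\sgn(a_i).
\end{equation}

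Next I would specialize the choice of $\cB$ via Gram--Schmidt, so that its first $n-2$ rows form an orthonormal basis of $\Span\{\bfv_1,\dots,\bfv_{n-2}\}$. Then $\cB\bfv_i$ has vanishing last two components for every $i\le n-2$, so each $a_i$ with $i\le n-2$ depends only on $u'_1,\dots,u'_{n-2}$. The integrations over $u'_{n-1},u'_n$ then factor out as pure Gaussians that evaluate to $1$, while the remaining $(n-2)$-dimensional integral reproduces exactly $E_{n-2}(\cB'\cV';\cB'\bfx)=\Phi_{n-2}^E(\{\bfv_i\}_{i=1}^{n-2};\bfx)$, where $\cB'$ denotes the first $n-2$ rows of $\cB$. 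The base case $n=2$ reproduces the convention $\Phi_0^E=1$.

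The main obstacle I anticipate is not analytical but notational: keeping track of how the column operations act on $\cM$ in terms of the shifted vector data, and justifying that the special Gram--Schmidt choice of $\cB$ is legitimate via the $\cB$-independence of $\Phi_n^E$. The case analysis behind the elementary sign identity is routine, and the measure-zero failure locus is harmless under the integral. Once these bookkeeping points are in place, the proof is essentially immediate.
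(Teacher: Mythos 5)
Your proposal is correct and follows essentially the same route as the paper's proof: choose an orthonormal basis adapted to the span (the paper fixes the Gram--Schmidt, flag-adapted basis from the start), apply the elementary sign identity $(\sgn(x_1)+\sgn(x_2))\sgn(x_1+x_2)-\sgn(x_1)\sgn(x_2)=1$ up to a measure-zero locus under the Gaussian integral, and then factor out the last two Gaussian integrations to land on $\Phi^E_{n-2}$. The only cosmetic difference is that you invoke the adapted basis in two stages rather than at the outset, which is harmless since the left-hand side is $\cB$-independent.
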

\begin{proof}
Let us choose an orthonormal basis $\{\bfb_i\}$ in the positive definite subspace of $\IR^d$ 
such that $\Span(\bfb_1,\dots,\bfb_k)=\Span(\bfv_1,\dots,\bfv_k)$ for all $k=1,\dots, n$.
This is possible due to the assumption that $n$ vectors $\bfv_i$ span the same positive definite subspace
as $\bfb_i$. Then, by construction, $\bfv_i\cdot\bfb_j=0$ for $j>i$.
Taking this property into account, identifying $\bfb_i$ with the rows of the matrix $\cB$, 
and applying the definitions \eqref{generrPhiME} and \eqref{generr-E} to the l.h.s. of \eqref{identPhiE},
one obtains the following expression
\bea
&&
\int_{\IR^n} \de \vu' \, e^{-\pi\sum_{i=1}^n(\bfb_i\cdot \bfx-u'_i)^2} \prod_{i=1}^{n-2} 
\sign\(\bfv_i\cdot \sum_{j=1}^i \bfb_j u'_j\)
\label{proofEn-1}\\
&\times&
\Bigl[\(\sgn(\bfv_{n-1}\cdot \bfu')+\sgn(\bfv_n\cdot \bfu')\)\sgn((\bfv_{n-1}+\bfv_n)\cdot \bfu')
-\sgn(\bfv_{n-1}\cdot \bfu')\sgn(\bfv_n\cdot \bfu')\Bigr],
\nn
\eea
where we denoted $\bfu'=\sum_{j=1}^n \bfb_j u'_j$. Due to the sign identity 

\be
\label{sign-id}
\(\sgn(x_1) + \sgn(x_2) \)\sgn(x_1+x_2)-\sgn(x_1)\sgn(x_2) = 1-\delta_{x_1}\delta_{x_2},
\ee
which is valid with the convention $\sgn(0)=0$, the square bracket can be replaced by 1 because
the term with Kronecker symbols is a measure zero contribution.
In the resulting integral, the integrations over $u'_{n-1}$ and $u'_n$ factorize and 
produce Gaussian integrals both equal to 1. Thus, we remain with
\be 
\int_{\IR^{n-2}} \de \vu' \, e^{-\pi\sum_{i=1}^{n-2}(\bfb_i\cdot \bfx-u'_i)^2} \prod_{i=1}^{n-2} 
\sign\(\bfv_i\cdot \sum_{j=1}^i \bfb_j u'_j\)=\Phi_{n-2}^E\(\{\bfv_i\};\bfx\),
\ee
which proves the Proposition.
\end{proof}

In the case where the vectors $\bfv_i$ correspond to the vectors $\bfv_e$ \eqref{defue} 
defined with respect to an unrooted labelled tree,
the identity \eqref{identPhiE} has a nice geometric interpretation. 
Let us define generalized error functions labelled by such trees
\be 
\Phi^E_{\cT}(\bfx)=\Phi_{|E_\cT|}^E(\{\bfv_e\};\bfx).
\label{defPhiT}
\ee 
Then it is easy to see that the identity \eqref{identPhiE} implies\footnote{The minus sign 
	on the r.h.s. of \eqref{esfT-T123id} appears as follows. 
	First, note that the vectors $\bfv_e$ defined by the trees $\hcT_r$ satisfy 
	$\bfv_{e_1}^{(1)}=\bfv_{e_1}^{(3)}-\bfv_{e_3}^{(3)}$, $\bfv_{e_2}^{(1)}=\bfv_{e_3}^{(3)}$, 
	$\bfv_{e_2}^{(2)}=\bfv_{e_1}^{(3)}$ and $\bfv_{e_3}^{(2)}=\bfv_{e_3}^{(3)}-\bfv_{e_1}^{(3)}$,
	where the upper index indicates the tree with respect to which the vector is defined.
	Comparing to \eqref{identPhiE}, one can identity $\bfv_{n-1}=\bfv_{e_3}^{(3)}$ and  $\bfv_{n}=-\bfv_{e_1}^{(3)}$,
	but then all terms on the l.h.s. of \eqref{identPhiE} and \eqref{PhiT-T123id} have opposite signs.}  
\be
\label{PhiT-T123id}
\Phi^E_{\hat{\cT}_1}(\bfx)+\Phi^E_{\hat{\cT}_2}(\bfx)-\Phi^E_{\hat{\cT}_3}(\bfx)=-\Phi^E_{\cT}(\bfx),
\ee
where the trees $\cT$ and $\hat\cT_r$, $r=1,2,3$, are shown in Fig. \ref{fig-Vign3}.

It is instructive to make the identity \eqref{PhiT-T123id} explicit for trees with 
3 and 5 vertices. The resulting identities will then be used in appendix \ref{ap-rational}.  
To write them, we denote different trees by $\cT_{n,k}$ where $n$ is the number of vertices, while $k$
labels different topologies as in Fig. \ref{fig-Tree3}. 
Different assignments of charges will be denoted by permutations of the standard labelling shown in the same figure,
and we will use notation $\Phi_{\cT_{n,k}}^{(i_1\cdots i_n)}$ for the generalized error function
evaluated on tree $\cT_{n,k}$ where the first vertex of the standard labelling carries charge $\gama_{i_1}$, etc. 
If a vertex carries a sum of charges, it will be labelled by the sum of their indices, like $(i_1+\cdots+ i_\ell)$.
$S_n$ and $C_n$ will denote the groups of arbitrary and cyclic permutations, respectively, of $n$ labels.
To label their elements, we will use the result of their action on the standard labelling: $\sigma=(\sigma(1)\cdots\sigma(n))$.
Finally, $\sigma_{ij}$ will denote a permutation of labels $i$ and $j$, 
and $\iota$ is the inversion $(1\cdots n)\mapsto (n\cdots 1)$. 
We emphasize that in our conventions the inversion and cyclic permutations act on labels, not on their positions. 
For  instance, $\iota(213)=(231)$ and $C_3\ni\tau\,:\, (213)\mapsto(132)$.

With these notations, at $n=3$ \eqref{PhiT-T123id} generates a single identity
which states that the sum over cyclic permutations of the labels on $\cT_{3,1}$ is equal to 1, 
\be
\Phi_{\cT_{3,1}}^{(123)}+\Phi_{\cT_{3,1}}^{(312)}+\Phi_{\cT_{3,1}}^{(231)}=1.
\label{ident-Phi3}
\ee 

\lfig{Standard labelling of unrooted trees with 3 and 5 vertices.}
{Tree3}{15.cm}{fig-Tree3}{-1.2cm}

At $n=5$ one obtains already several possibilities which arise from different choices for 
the trees $\cT_r$ in Fig. \ref{fig-Vign3}. Up to permutations, one has three possibilities: 
1) $\cT_1$ coincides with $\cT_{3,1}$ and $\ver_1$ coincides with the third vertex of $\cT_{3,1}$, 
while $\cT_2=\cT_3=\bullet$ are trivial; 
2) $\cT_1$ coincides with $\cT_{3,1}$ and $\ver_1$ coincides with the second (middle) vertex of $\cT_{3,1}$, 
while $\cT_2=\cT_3=\bullet$; 
3) $\cT_1$ and $\cT_2$ coincide with the single tree consisting of two vertices, 
$\cT_{2,1}=\bullet\!\mbox{---}\!\bullet$, 
while $\cT_3=\bullet$.
The corresponding identities can be written as 
\begin{subequations}
\bea
\label{cT3-1}
\Phi_{\cT_{5,1}}^{(12345)}+\Phi_{\cT_{5,1}}^{(12354)}-\Phi_{\cT_{5,2}}^{(12345)}&=&\Phi_{\cT_{3,1}}^{(12(3+4+5))},
\\
\label{cT2-345}
\Phi_{\cT_{5,3}}^{(12345)}-\Phi_{\cT_{5,2}}^{(45321)}-\Phi_{\cT_{5,2}}^{(54321)}&=&\Phi_{\cT_{3,1}}^{(1(3+4+5)2)},
\\
\label{cT2-345-2}
\Phi_{\cT_{5,1}}^{(23451)}+\Phi_{\cT_{5,2}}^{(23541)}+\Phi_{\cT_{5,2}}^{(15342)}&=&\Phi_{\cT_{3,1}}^{(1(3+4+5)2)}.
\eea
\label{ident-PhiT5}
\end{subequations}
To each of these identities one can apply a permutation of indices.
The order of the permutation group $S_5$ is 120. But since, due to symmetries of the trees, 
the first identity is invariant under 
permutation $\sigma_{45}$, the second under $\sigma_{12}$ and $\sigma_{45}$, and the third
under $\sigma_{12}\sigma_{35}$, permutations generate 60, 30 and 60 identities, respectively.
One can check that among these 150 identities only 100 are mutually independent.

Of particular interest are identities that do not involve generalized error functions of lower ranks.
There are two ways to obtain them from \eqref{ident-PhiT5}. The most obvious one is to apply any permutation on labels (345), 
which leaves the r.h.s. invariant, and take a difference.
In this way we get 
\begin{subequations}
\bea
\Phi_{\cT_{5,1}}^{(12345)}+\Phi_{\cT_{5,1}}^{(12354)}-\Phi_{\cT_{5,2}}^{(12345)}
&=&\Phi_{\cT_{5,1}}^{(12435)}+\Phi_{\cT_{5,1}}^{(12453)}-\Phi_{\cT_{5,2}}^{(12435)}
\nn\\
&=&\Phi_{\cT_{5,1}}^{(12543)}+\Phi_{\cT_{5,1}}^{(12534)}-\Phi_{\cT_{5,2}}^{(12543)},
\label{cT5-PhiE-ident1}\\
\Phi_{\cT_{5,3}}^{(12345)}-\Phi_{\cT_{5,2}}^{(45321)}-\Phi_{\cT_{5,2}}^{(54321)}
&=&\Phi_{\cT_{5,3}}^{(12435)}-\Phi_{\cT_{5,2}}^{(35421)}-\Phi_{\cT_{5,2}}^{(53421)}
\nn\\
&=&\Phi_{\cT_{5,3}}^{(12543)}-\Phi_{\cT_{5,2}}^{(43521)}-\Phi_{\cT_{5,2}}^{(34521)}
\nn\\
=\Phi_{\cT_{5,1}}^{(23451)}+\Phi_{\cT_{5,2}}^{(23541)}+\Phi_{\cT_{5,2}}^{(15342)}
&=&
\Phi_{\cT_{5,1}}^{(24351)}+\Phi_{\cT_{5,2}}^{(24531)}+\Phi_{\cT_{5,2}}^{(15432)}
\label{cT5-PhiE-ident2}\\
=\Phi_{\cT_{5,1}}^{(25431)}+\Phi_{\cT_{5,2}}^{(25341)}+\Phi_{\cT_{5,2}}^{(13542)}
&=&\Phi_{\cT_{5,1}}^{(23541)}+\Phi_{\cT_{5,2}}^{(23451)}+\Phi_{\cT_{5,2}}^{(14352)}
\nn\\
=\Phi_{\cT_{5,1}}^{(24531)}+\Phi_{\cT_{5,2}}^{(24351)}+\Phi_{\cT_{5,2}}^{(13452)}
&=& \Phi_{\cT_{5,1}}^{(25341)}+\Phi_{\cT_{5,2}}^{(25431)}+\Phi_{\cT_{5,2}}^{(14532)}.
\nn
\eea
\label{ident-Phi5}
\end{subequations}
The second way is to sum 3 identities \eqref{ident-PhiT5} such that on the r.h.s. one can apply
\eqref{ident-Phi3}. There are many ways to write the resulting identity which are all equivalent due to \eqref{ident-Phi5}.
For example, we have
\be
\begin{split}
	\label{cT5-PhiE-ident123}
	&
	\Phi_{\cT_{5,1}}^{(12345)} + \Phi_{\cT_{5,1}}^{(12354)}+ \Phi_{\cT_{5,1}}^{(21543)} + \Phi_{\cT_{5,1}}^{(21534)}
	\\
	&
	- \Phi_{\cT_{5,2}}^{(12345)}- \Phi_{\cT_{5,2}}^{(21543)} -\Phi_{\cT_{5,2}}^{(45321)}-\Phi_{\cT_{5,2}}^{(54321)}+\Phi_{\cT_{5,3}}^{(12345)}
	=1.
\end{split}
\ee
Of course, to each of the identities \eqref{ident-Phi5}, \eqref{cT5-PhiE-ident123} one can also apply 
an arbitrary permutation.
As a result, there are in total 81 independent identities.

Once the full set of identities is identified, it is possible to show 
that they have a linear combination that is a direct generalization of \eqref{ident-Phi3}:
	\be
	\sum_{\tau\in C_5} \Phi_{\cT_{5,1}}^\tau=1,
	\label{cycle-T51}
	\ee
where the sum goes over cyclic permutations applied to the labelling (12345).
This suggests that a similar identity holds for any odd $n$, namely, that the cyclic permutations of the tree 
of linear topology
$\cT_{n,1}=\bullet\!\mbox{---}\!\bullet\!\mbox{--}\cdots \mbox{--}\!\bullet\!\mbox{---}\!\bullet\,$
with odd number of vertices always sum to one, 
\be 
\sum_{\tau\in C_n}\Phi_{\cT_{n,1}}^\tau=1,   
\qquad \mbox{for odd }n.
\label{ident-Phin-cycl}
\ee
We did not attempt to prove this identity analytically, but we have done extensive numerical checks of 
\eqref{ident-Phin-cycl} and all above identities.

\section{Coefficients $a_\cT$ and $e_\cT$}
\label{ap-coef}

\newcommand{\Tsize}{0.25mm}
\newcommand{\scaleAR}{0.9}
\newcommand{\posAR}{0.6}
\newcommand{\drawTa}{	
	\begin{tikzpicture}[
		grow=east,
		every node/.style={circle, fill=black,draw=black, minimum size=4pt, inner sep=0pt},
		]
		
		\node {};	
	\end{tikzpicture}
}
\newcommand{\drawTc}{
	\begin{tikzpicture}[
		grow=east,
		every node/.style={circle, fill=black,draw=black, minimum size=4pt, inner sep=0pt},
		edge from parent/.style={draw, line width=\Tsize,postaction={decorate},
			decoration={markings, mark=at position \posAR with {\arrow[scale=\scaleAR]{>}}}},
		level distance=1cm, 
		sibling distance=1cm 
		]
		
		\node {}
		child {node {}
			child {node {}
			}
		};	
\end{tikzpicture}
}
\newcommand{\drawTe}[1]{
	\ifcase#1 \or 
	\begin{tikzpicture}[scale=0.5,
		grow=east,
		every node/.style={circle, fill=black,draw=black, minimum size=4pt, inner sep=0pt},
		edge from parent/.style={draw, line width=\Tsize,postaction={decorate},
			decoration={markings, mark=at position 0.6 with {\arrow[scale=\scaleAR]{>}}}},
		level distance=1cm, 
		sibling distance=1cm 
		]
		
		\node {}
		child {node {}
			child {node {}
				child {node {}
					child {node {}
					}
				}
			}
		};	
	\end{tikzpicture}
	\or 
	\begin{tikzpicture}[scale=0.5,
		grow=east,
		every node/.style={circle, fill=black,draw=black, minimum size=4pt, inner sep=0pt},
		edge from parent/.style={draw, thick, line width=\Tsize,postaction={decorate},
			decoration={markings, mark=at position 0.6 with {\arrow[scale=\scaleAR]{>}}}},
		level distance=1cm, 
		sibling distance=1cm 
		]
		
		\node {}
		child {node {}
			child {node {}
				child {node {}} 
				child {node {}} 
			}
		};	
	\end{tikzpicture}
	\or
	\begin{tikzpicture}[scale=0.5,
		grow cyclic,
		every node/.style={circle, fill=black, draw=black, minimum size=4pt, inner sep=0pt},
		edge from parent/.style={
			draw, 
			thick, 
			line width=\Tsize, 
			postaction={decorate},
			decoration={markings, mark=at position 0.6 with {\arrow[scale=\scaleAR]{>}}}
		},
		level distance=1cm,
		sibling angle=90
		]
		\node {}
		child {node {}}
		child {node {}}
		child {node {}}
		child {node {}};
	\end{tikzpicture}
	\fi}
\newcommand{\drawTg}[1]{
	\ifcase#1 \or 
	\begin{tikzpicture}[scale=0.5,
		grow=east,
		every node/.style={circle, fill=black,draw=black, minimum size=4pt, inner sep=0pt},
		edge from parent/.style={draw, line width=\Tsize,postaction={decorate},
			decoration={markings, mark=at position 0.6 with {\arrow[scale=\scaleAR]{>}}}},
		level distance=1cm, 
		sibling distance=1cm 
		]
		
		\node {}
		child {node {}
			child {node {}
				child {node {}
					child {node {}
						child {node {}
							child{node {} 
							}
						}
					}
				}
			}
		};	
	\end{tikzpicture}
	\or
	\begin{tikzpicture}[scale=0.5,
		grow=east,
		every node/.style={circle, fill=black,draw=black, minimum size=4pt, inner sep=0pt},
		edge from parent/.style={draw, thick, line width=\Tsize,postaction={decorate},
			decoration={markings, mark=at position 0.6 with {\arrow[scale=\scaleAR]{>}}}},
		level distance=1cm, 
		sibling distance=1cm 
		]
		
		\node {}
		child {node {}
			child {node {}
				child {node {}
					child {node {}
						child {node {}} 
						child {node {}} 
					}
				}
			}
		};	
	\end{tikzpicture}
	\or
	
	\begin{tikzpicture}[scale=0.5,
		grow=east,
		every node/.style={circle, fill=black,draw=black, minimum size=4pt, inner sep=0pt},
		edge from parent/.style={draw, thick, line width=\Tsize,postaction={decorate},
			decoration={markings, mark=at position 0.6 with {\arrow[scale=\scaleAR]{>}}}},
		level distance=1cm, 
		sibling distance=1.5cm 
		]
		
		\node {}
		child {node {}
			child {node {}[clockwise from = 0, sibling angle = 90]
				child {node {}
					child {node {}
						child {node {}
						}
					}
				}
				child {node {}}
			}
		};	
	\end{tikzpicture}
	\or
	\begin{tikzpicture}[scale=0.5,
		grow=east,
		every node/.style={circle, fill=black, draw=black, minimum size=4pt, inner sep=0pt},
		edge from parent/.style={
			draw, 
			thick, 
			line width=\Tsize, 
			postaction={decorate},
			decoration={markings, mark=at position 0.6 with {\arrow[scale=\scaleAR]{>}}}
		},
		level distance=1cm,
		sibling distance=1 cm
		]
		\node {}
		child {node {}
			child {node {}
				child {node {}
					child {node {}}
					child {node {}}
					child {node {}}
				}
			}
		};
	\end{tikzpicture}
	\or
	\begin{tikzpicture}[scale=0.5,
		grow=east,
		every node/.style={circle, fill=black,draw=black, minimum size=4pt, inner sep=0pt},
		edge from parent/.style={draw, thick, line width=\Tsize,postaction={decorate},
			decoration={markings, mark=at position 0.6 with {\arrow[scale=\scaleAR]{>}}}},
		level distance=1cm, 
		sibling distance=1cm 
		]
		
		\node {}
		child {node {}
			child {node {}[clockwise from = 0, sibling angle = 90,level distance =0.7cm]
				child {node {}[clockwise from = 30, sibling angle = 60,level distance =1cm]
					child {node {}}
					child {node {}}
				}
				child {node {}}
			}
		};	
	\end{tikzpicture}
	\or
	\begin{tikzpicture}[scale=0.5,
		grow=east,
		every node/.style={circle, fill=black,draw=black, minimum size=4pt, inner sep=0pt},
		edge from parent/.style={draw, thick, line width=\Tsize,postaction={decorate},
			decoration={markings, mark=at position 0.6 with {\arrow[scale=\scaleAR]{.}}}},
		level distance=1cm, 
		sibling distance=1cm 
		]
		
		\node {} [clockwise from = 0, sibling angle=120]
		child {node {}[clockwise from = 0, sibling angle=150]
			child {node {}[decoration={markings, mark=at position 0.6 with {\arrow[scale=\scaleAR]{>}}}]
				child {node {}[clockwise from = 30, sibling angle=60]
					child {node {}} 
					child {node {}} 
				}
			}
			[decoration={markings, mark=at position 0.6 with {\arrow[scale=\scaleAR]{>}}}]}
		child[clockwise from = -30]{node{}[decoration={markings, mark=at position 0.6 with {\arrow[scale=\scaleAR]{<}}}]}
		child[clockwise from = 30]{node{}[decoration={markings, mark=at position 0.6 with {\arrow[scale=\scaleAR]{<}}}]};	
	\end{tikzpicture}
	\or
	\begin{tikzpicture}[scale=0.5,
		grow=east,
		every node/.style={circle, fill=black, draw=black, minimum size=4pt, inner sep=0pt},
		edge from parent/.style={
			draw, 
			thick, 
			line width=\Tsize, 
			postaction={decorate},
			decoration={markings, mark=at position 0.6 with {\arrow[scale=\scaleAR]{>}}}
		},
		level distance=1cm,
		sibling distance=0.7cm
		]
		\node {}
		child {node {}
			child {node {}
				child {node {}}
				child {node {}}
				child {node {}}
				child {node {}}
			}
		};
	\end{tikzpicture}
	\or
	\begin{tikzpicture}[scale=0.5,
		grow cyclic,
		every node/.style={circle, fill=black, draw=black, minimum size=4pt, inner sep=0pt},
		edge from parent/.style={
			draw, 
			thick, 
			line width=\Tsize, 
			postaction={decorate},
			decoration={markings, mark=at position 0.6 with {\arrow[scale=\scaleAR]{>}}}
		},
		level distance=1cm,
		sibling angle=60
		]
		\node {}
		child {node {}}
		child {node {}}
		child {node {}}
		child {node {}}
		child {node {}}
		child {node {}};
	\end{tikzpicture}
	\or
	\begin{tikzpicture}[scale=0.5,
		grow=east,
		every node/.style={circle, fill=black,draw=black, minimum size=4pt, inner sep=0pt},
		edge from parent/.style={draw, thick, line width=\Tsize,postaction={decorate},
			decoration={markings, mark=at position 0.6 with {\arrow[scale=\scaleAR]{>}}}},
		level distance=1cm, 
		sibling distance=1cm 
		]
		
		\node {}
		child {node {}
			child {node {}
				child {node {} 
					child{node {}}
				} 
				child {node {}
					child {node {}
				}}
			}
		};	
	\end{tikzpicture}
	\or
	\begin{tikzpicture}[scale=0.5,
		grow=east,
		every node/.style={circle, fill=black,draw=black, minimum size=4pt, inner sep=0pt},
		edge from parent/.style={draw, thick, line width=\Tsize,postaction={decorate},
			decoration={markings, mark=at position 0.6 with {\arrow[scale=0.7]{.}}}},
		level distance=1cm, 
		sibling distance=1cm 
		]
		
		\node {} [clockwise from = 0, sibling angle=120]
		child {node {}[clockwise from = 0, sibling angle=150][decoration={markings, mark=at position 0.6 with {\arrow[scale=\scaleAR]{>}}}][clockwise from = 45, sibling angle=45]
			child {node[minimum size = 4pt] {}} 
			child {node {}} 
			child {node {}}
			[decoration={markings, mark=at position 0.6 with {\arrow[scale=\scaleAR]{>}}},sibling angle=150]}
		child[clockwise from = -30]{node{}[decoration={markings, mark=at position 0.6 with {\arrow[scale=0.7]{<}}}]}
		child[clockwise from = 30]{node{}[decoration={markings, mark=at position 0.6 with {\arrow[scale=0.7]{<}}}]};	
	\end{tikzpicture}
	\or
	\begin{tikzpicture}[scale=0.5,
		grow=east,
		every node/.style={circle, fill=black,draw=black, minimum size=4pt, inner sep=0pt},
		edge from parent/.style={draw, thick, line width=\Tsize,postaction={decorate},
			decoration={markings, mark=at position 0.6 with {\arrow[scale=\scaleAR]{>}}}},
		level distance=1cm, 
		sibling distance=1.5cm 
		]
		
		\node {}
		child {node {}
			child {node {}[clockwise from = 90, sibling angle = 90]
				child {node {}}
				child {node {}[clockwise from = 0, sibling angle = 90]
					child {node {}
					}
				}
				child {node {}}
			}
		};	
	\end{tikzpicture}
	\fi}

\newcommand\tableT[1][7]{\rule{0pt}{#1ex}} 
\newcommand\tableB[1][-3]{\rule[#1ex]{0pt}{0pt}} 

\begin{tabular}{|c|c|c|c|}
	\hline
	Name                 										& 			Tree       																				& 			$a_\cT$                                             					& $e_\cT$                                               \\ \hline
	$\cT_{1,1}$         \tableT[3] \tableB[-1.5]							& 			\drawTa       																				& 			$1$                                                 					& 		$1$                                        		\\ \hline
	$\cT_{3,1}$         \tableT[3] \tableB[-1.5]   								&      		\drawTc																					& 			\small$-\frac13$                                    					& 		\small$\frac13$                              	\\ \hline
	$\cT_{5,1}$      \tableT[3] \tableB[-1.5]   								& 			\drawTe{1}  																				& 			\small$\frac{2}{15}$                                					& 		\small$\frac15$                             	\\ \hline
	\makecell{\vspace{0.5em}$\cT_{5,2}$} \tableT[4]  			& 			\makecell{\vspace{0.1em} \drawTe{2}}  														& 			\makecell{\vspace{0.5em}$-\frac{1}{15}$}     							& 		\makecell{\vspace{0.5em}$\frac{1}{15}$}        	\\ \hline
	\makecell{\vspace{0.5em}$\cT_{5,3}$} \tableT[5]  			& 			\makecell{\vspace{0.1em}\drawTe{3}}  														& 			\makecell{\vspace{0.5em}$\frac15$}           							& 		\makecell{\vspace{0.5em}$\frac{7}{15}$}        	\\ \hline
	$\cT_{7,1}$         \tableT[3] \tableB[-1.5]							& 			\makecell{\vspace{0.1em}\drawTg{1}}  												& 			\makecell{\vspace{0.3em}$-\frac{17}{315}$}                              & 		\makecell{\vspace{0.3em}$\frac17$}         		\\ \hline
	\makecell{\vspace{0.7em}$\cT_{7,2}$} \tableT[5]  			& 			\makecell{\vspace{0.3em}\drawTg{2}}  														& 			\makecell{\vspace{0.7em}$\frac{8}{315}$}     							& 		\makecell{\vspace{0.7em}$\frac{3}{35}$}        	\\ \hline
	\makecell{\vspace{0.9em}$\cT_{7,3}$}\tableT[5]				&			\makecell{\vspace{0.1em}\drawTg{3}}															&			\makecell{$-\frac{1}{35}$\vspace{0.9em}}								&		\makecell{$\frac{1}{35}$\vspace{0.9em}}        	\\ \hline
	$\cT_{7,4}$ \tableT[5]  									& 			\makecell{\vspace{0.1em}\drawTg{4}}  														& 			$\frac{8}{105}$     													& 		$-\frac17$    					  				\\ \hline
\end{tabular}
\hspace{2cm}
\begin{tabular}{|c|c|c|c|}
		\hline
	Name                 										& 			Tree      																				& 			$a_\cT$                                             					& $e_\cT$                                               \\ \hline
	\makecell{$\cT_{7,5}$\vspace{0.8em}} \tableT[5]  			& 			\makecell{\drawTg{5}\vspace{0.1em}}  														& 			\makecell{$-\frac{1}{105}$\vspace{0.8em}}   	 						& 		\makecell{$-\frac{13}{105}$\vspace{0.8em}}     	\\ \hline
	\makecell{$\cT_{7,6}$\vspace{0.7em}} \tableT[5]  			& 			\makecell{\drawTg{6}}  																		& 			\makecell{$-\frac{1}{63}$\vspace{0.7em}}     							& 		\makecell{$\frac{11}{105}$\vspace{0.7em}}		\\ \hline
	\makecell{$\cT_{7,7}$\vspace{0.7em}} \tableT[6]				& 			\makecell{\drawTg{7}}  																		& 			\makecell{$-\frac{1}{35}$\vspace{0.7em}}     							& 		\makecell{$-\frac{53}{105}$\vspace{0.7em}}     	\\ \hline
	\makecell{$\cT_{7,8}$\vspace{0.9em}}\tableT[6]				& 			\makecell{\drawTg{8}}  																		& 			\makecell{$\frac17$\vspace{0.9em}}           							& 		\makecell{$-\frac{31}{21}$\vspace{0.9em}}      	\\ \hline
	\makecell{$\cT_{7,9}$\vspace{0.5em}} \tableT[4]  			& 			\makecell{\drawTg{9}}  
	& 			\makecell{$-\frac{2}{35}$\vspace{0.5em}}     							& 		\makecell{$-\frac{1}{105}$\vspace{0.5em}}      	\\ \hline
	\makecell{$\cT_{7,10}$\vspace{0.9em}} \tableT[5] 			& 			\makecell{\drawTg{10}} 
	& 			\makecell{$-\frac{1}{21}$\vspace{0.9em}}     							& 		\makecell{$\frac{19}{105}$\vspace{0.9em}}      	\\ \hline
	\makecell{$\cT_{7,11}$\vspace{0.9em}} \tableT[5.5] 			& 			\makecell{\drawTg{11}} 
	& 			\makecell{$\frac{2}{105}$\vspace{0.9em}}     							& 		\makecell{$-\frac{23}{105}$\vspace{0.9em}}     	\\ \hline
\end{tabular}

\subsection*{Proof of the identity \eqref{eT-T123id}}

To prove the identity \eqref{eT-T123id}, we will rely on a similar identity satisfied
by the coefficients $a_\cT$ \cite{Alexandrov:2018lgp}:
\be
\label{aT-T123id}
a_{\hat{\cT}_1}+a_{\hat{\cT}_2}-a_{\hat{\cT}_3}=a_{\cT_1}a_{\cT_2}a_{\cT_3},
\ee
where the trees $\cT_r$ and $\hat \cT_r$ are drawn in Fig. \ref{fig-Vign3}. 
We will proceed by induction. 

At $n=3$ there is a single identity of the form \eqref{eT-T123id}
corresponding to the trivial trees $\cT_r$ consisting of one vertex.
Then it is easy to check that the identity holds for the values given in the above table.
(One should take into account that under a flip of the orientation of an edge, the coefficient flips the sign.)

Then let us assume that the identity \eqref{eT-T123id} holds for all trees $\hcT_r$ with $n'<n$ vertices. 
To treat the case of $n$ vertices, we use the formula \eqref{res-eT} in the l.h.s. of \eqref{eT-T123id} to get
\be
-\sum_{\smash{\mathop{\cup}\cT'_k\simeq\hcT_1 }}\,e_{\hcT_1/\{\cT'_k\}}\prod_{k=1}^m a_{\cT'_k}
-\sum_{\smash{\mathop{\cup}\cT'_k \simeq\hcT_2 }}\,e_{\hcT_2/\{\cT'_k\}}\prod_{k=1}^m a_{\cT'_k}
+\sum_{\smash{\mathop{\cup}\cT'_k \simeq\hcT_3 }}\,e_{\hcT_3/\{\cT'_k\}}\prod_{k=1}^m a_{\cT'_k},
\label{lhs-eT}
\ee
where we recall that in each decomposition there should be at least one non-trivial tree (having more than one vertex).
For each tree $\hcT_r$, its decomposition can be of three types:
\begin{enumerate}
	\item both edges $e_i$ belong to one of the trees of the decomposition;
	\item only one of the two edges $e_i$ belongs to a tree of the decomposition;
	\item neither of the edges $e_i$ belongs to the decomposition trees.
\end{enumerate}
We will show below that the contributions of type 2 cancel each other, the contributions of type 3 are 
canceled by contributions of type 1, while the remaining contributions of type 1
give exactly $-e_\cT$. 

We start with the type 2 contributions and consider a decomposition $\mathop{\cup}\cT'_k$ of $\hcT_1$ such that 
$e_2\subset\cT'_{k_0}$. It is clear that $\cT'_{k_0}\subset \cT_1\mathop{\cup}_{e_2}\cT_3$, i.e. it is a subtree of 
the tree formed by $\cT_1$ and $\cT_3$ joint by $e_2$, while all other trees $\cT'_k$, $k\neq k_0$, 
are subtrees of $\cT_1$, $\cT_2$ or $\cT_3$. This implies that the same set of subtrees provides a decomposition of $\hcT_2$. 
Extracting the contribution of these two decompositions from \eqref{lhs-eT}, one finds 
\be
-\(e_{\hcT_1/\{\cT'_k\}}+e_{\hcT_2/\{\cT'_k\}}\)\prod_{k=1}^m a_{\cT'_k}.
\label{type2-eT}
\ee
After collapsing all subtrees $\cT'_k$, the edge $e_2$ disappears, while $e_1\subset\hcT_1$ and $\e_3\subset\hcT_2$ remain.
But now they connect the same vertices and differ only by orientation. 
Therefore, $e_{\hcT_1/\{\cT'_k\}}=-e_{\hcT_2/\{\cT'_k\}}$ and the combination \eqref{type2-eT} vanishes.
In exactly the same way one can show the vanishing of all other type 2 contributions. 

Next we consider a type 3 decomposition. In this case each subtree $\cT'_k$ also belongs to one of the trees $\cT_s$.
Therefore, we can split them into 3 subsets $\{\cT'_{s,l}\}$ such that $\cup_l \cT'_{s,l}\simeq\cT_s$.
It is also clear that $\cup_{s,l}\cT'_{s,l}\simeq\hcT_r$ for all $r$.
Therefore, the contribution of such decomposition into \eqref{lhs-eT} is given by 
\be
-\(e_{\hcT_1/\{\cT'_{s,l}\}}+e_{\hcT_2/\{\cT'_{s,l}\}}-e_{\hcT_3/\{\cT'_{s,l}\}}\)\prod_{s=1}^3 \prod_{l=1}^{m_s} a_{\cT'_{s,l}}.
\label{type3-eT}
\ee
The crucial observation is that the trees $\hcT_r/\{\cT'_{s,l}\}$ are constructed from the three trees $\cT_s/\{\cT'_{s,l}\}$ 
exactly as in Fig. \ref{fig-Vign3}. Since the number of their vertices is less than $n$ (because at least one of $\cT'_{s,l}$
must be non-trivial), they are subject to the induction hypothesis and the contribution \eqref{type3-eT} is equal to
\be
e_{\cT/\{\cT'_{s,l}\}}\prod_{s=1}^3 \prod_{l=1}^{m_s} a_{\cT'_{s,l}}.
\label{type2-eT2}
\ee

Let us now construct a type 1 decomposition that is associated with the type 3 decomposition $\{\cT'_{s,l}\}$ just analyzed. 
Denote $l_s$ the index such that $\ver_s\in \cT'_{s,l_s}$. Using the three trees $\cT'_{s,l_s}$
in place of $\cT_s$, one can construct the three trees $\hcT'_r=\cup_{e_i}\cT'_{s,l_s}$ as in Fig. \ref{fig-Vign3}.
Then $\{\hcT'_r,\cT'_{s,l}\}_{s,l\ne l_s}$ provide decompositions of $\hcT_r$ of type 1.
Their contribution to \eqref{lhs-eT} is given by 
\be
\label{type1-eT}
- e_{\cT'/\{\cT'_{s,l}\}_{s,l\ne l_s}} \(a_{\hcT'_1}+a_{\hcT'_2}-a_{\hcT'_3}\)
\prod_{s=1}^3 \prod_{l\ne l_s} a_{\cT'_{s,l}},
\ee
where $\cT'=\hcT_r/\hcT'_r$ is independent of $r$. Furthermore, $\cT'=\cT/\{\cT'_{s,l_s}\}$ so that 
the $e$-coefficients in \eqref{type2-eT2} and \eqref{type1-eT} coincide.
On the other hand, applying \eqref{aT-T123id} to the expression in brackets in \eqref{type1-eT} one gets the missing factor 
$\prod_{s=1}^3 a_{\cT'_{s,l_s}}$. As a result, the two contributions \eqref{type2-eT2} and \eqref{type1-eT} cancel each other.

However, for each $\hcT_r$ there is one type 1 decomposition that was not accounted for by the above construction.
It consists of a single non-trivial tree obtained by joining the two edges $e_i$ and a set of trivial subtrees, 
i.e. given by a single vertex. Such a decomposition can be obtained from
a type 3 decomposition where all $\{\cT'_{s,l}\}$ are trivial, but as we know, it is excluded from 
the sum in \eqref{lhs-eT}. 
Thus, we remain with the following contribution
\be
-e_{\cT}\(a_{(e_1,e_2)}+a_{(e_2,e_3)}-a_{(e_1,e_3)}\) = -e_{\cT}\(\frac13 +\frac13+\frac13\)=-\e_\cT,
\ee
which completes the proof.

\section{Irrational vs. rational coefficients}
\label{ap-rational}

In this appendix we verify Conjecture \ref{conj-eT} for a few low values of $n$.
We will extensively use notations $\esf_{\cT_{n,k}}^{(i_1\dots i_n)}=\Phi_{\cT_{n,k}}^{(i_1\dots i_n)}(0)$
and $\frK_{\cT_{n,k}}^{(i_1\dots i_n)}$,
similar to the ones introduced in \$\ref{subsec-identPhi}.
See also the paragraph above \eqref{ident-Phi3} for our conventions on permutations acting on labels of these objects.

The first non-trivial case happens at $n=3$. The function \eqref{exprcEf-new} then reads
\be 
\Ef_3=\frac16\,\sum_{\rm cyclic\ perm.} \gamma_{12}\gamma_{23}\(\sgn(\gamma_{1,2+3})\,\sgn(\gamma_{1+2,3})
+\delta_{123} \esf_{\cT_{3,1}}^{(123)}\),
\label{proof-n=3}
\ee 
where we denoted $\delta_{123}=\delta_{\gamma_{1,2+3}}\delta_{\gamma_{1+2,3}}$, 
while $\cT_{3,1}$ is the single unrooted tree with 3 vertices 
as shown in the table of appendix \ref{ap-coef}.
It is easy to check that $\gamma_{12}\gamma_{23}\delta_{123}$ is invariant under cyclic permutations.
Therefore, one can apply the identity \eqref{ident-eT}, 
which in this case is a specification of \eqref{ident-Phi3} to $\bfx=0$ and
allows to replace the coefficients $\esf_{\cT_{3,1}}$ by 1/3.
This results in the representation \eqref{En0new2} with $e_{\cT_{3,1}}=1/3$
in agreement with the Conjecture and the value given in appendix \ref{ap-coef}.

For $n=4$ the calculation is almost the same as at this order there are no new non-trivial coefficients $e_\cT$.
In this case it is enough to consider the contribution to $\Ef_4$ where two of three $\Gamma_e$'s vanish.
One can show that it is given by
\be 
\frac{1}{4\cdot 4!}\sum_{\rm perm.} \Biggl[\delta_{12(3+4)}\gamma_{34}\sgn(\gamma_{1+2+3,4}) 
\sum_{\rm cyclic\ perm. \atop of\ 1,2,3+4}\gamma_{12}\gamma_{2,3+4} \esf_{\cT_{3,1}}^{(12(3+4))}\Biggr].
\ee
The second sum is the same as in \eqref{proof-n=3} so that the coefficients $\esf_{\cT_{3,1}}$
can again be replaced by 1/3 consistently with the Conjecture.

For $n=5$ there are three types of trees contributing to \eqref{exprcEf-new} and shown in Figure \ref{fig-Tree3}.
It is convenient to rewrite the sum over different labellings of each tree as a sum over all 
permutations of labels divided by the symmetry factor of the tree:
\be
\Ef_5=	\frac{1}{5!}
	\sum_{\sigma \in \cS_5}
	\sigma\[\frac12\, \frK_{\cT_{5,1}}\,\Ssf_{\cT_{5,1}}
	+\frac12\,\frK_{\cT_{5,2}}\,\Ssf_{\cT_{5,2}}
	+\frac{1}{24}\, \frK_{\cT_{5,3}}\,\Ssf_{\cT_{5,3}}\].
\label{En0-n5-perm}
\ee
Then to prove the Conjecture, we need to consider two cases where each tree has either two or all four edges 
with vanishing $\Gamma_e$.
(For an odd number of vanishing $\Gamma_e$'s the corresponding coefficients $\esf_\cT$ vanish and
there is nothing to prove.) We will start with the first case which is simpler and goes along the same lines 
as for $n=3$ and 4. 

The contribution to $\Ssf_{\cT_{5,k}}$ corresponding to two vanishing $\Gamma_e$'s can be written as
\be
\sum_{e_1,e_2\in E_{\cT_{5,k}}}
\esf_{\cT_{\{e_1,e_2\}}} \delta_{\Gamma_{e_1}}\delta_{\Gamma_{e_2}} \sgn(\Gamma_{e'_1})\sgn(\Gamma_{e'_2}),
\label{En0-m0-2del}
\ee
where $\{e'_1,e'_2\}=E_\cT \backslash \{e_1,e_2\}$ and
$\cT_{\{e_1,e_2\}}$ is the tree obtained from $\cT_{5,k}$ by contracting the edges $e'_1$ and $e'_2$,
which topologically always coincides with $\cT_{3,1}$.
Then using permutations and the conditions implied by the Kronecker symbols, one can show that 
in the corresponding contribution to \eqref{En0-n5-perm} all terms can be made proportional just 
to the following two combinations
\be
\begin{split}
\Delta_1^{(12345)} = &\,
\delta_{\gamma_{1,2+3+4+5}}\delta_{\gamma_{2,1+3+4+5}} \sgn(\gamma_{3,1+2+4+5})\sgn(\gamma_{5,1+2+3+4}),
\\
\Delta_2^{(12345)} =&\,
\delta_{\gamma_{1+2+3,4+5}}\delta_{\gamma_{1,2+3+4+5}}\sgn(\gamma_{2,1+3+4+5}) \sgn(\gamma_{5,1+2+3+4}).
\end{split}
\label{defDel}
\ee
The resulting expression reads
\be
\label{2del-step1}
\begin{split}
	-\frac{1}{2\cdot 5!}& 
	\sum_{\sigma \in S_5}
	\sigma \biggl[ 
	\Delta_1^{(12345)}\, 
	\( \(\frK^{(12345)}_{\cT_{5,1}}+\frK^{(34521)}_{\cT_{5,1}}-\frK_{\cT_{5,2}}^{(12435)}\) \esf_{\cT_{3,1}}^{(12(3+4+5))}
	\right.
	\\
	&\left.\qquad
	+\(\frK^{(15432)}_{\cT_{5,1}}+\frK_{\cT_{5,2}}^{(15432)}+\frK_{\cT_{5,2}}^{(13425)}-\frK_{\cT_{5,2}}^{(54321)}
    +\frK_{\cT_{5,3}}^{(12435)}\)\esf_{\cT_{3,1}}^{(1(3+4+5)2)}\)
	\\
	&
	+\Delta_2^{(12345)}
	\(2\(\frK^{(12345)}_{\cT_{5,1}}+\frK_{\cT_{5,2}}^{(54321)}\)\esf_{\cT_{3,1}}^{1(2+3)(4+5)} 
	+\frK_{\cT_{5,1}}^{(23145)}\esf_{\cT_{3,1}}^{(2+3)1(4+5)}\)\biggr].
\end{split}
\ee
Note that the factors \eqref{defDel} possess the following symmetry properties:
$\Delta_1^{(12345)}$ is invariant under permutations $\sigma_{12}$  and $\sigma_{35}$, while
$\Delta_2^{(12345)}$ is antisymmetric under $\sigma_{23}$ and $\sigma_{45}$. 
Using these properties again together with 
the conditions implied by the Kronecker symbols, one can recombine the factors of Dirac products of charges
defined in \eqref{defGame} to rewrite \eqref{2del-step1} in the following form
\be
\begin{split}
	-\frac{1}{4\cdot 5!}&
	\sum_{\sigma \in S_5}
	\sigma \biggl[ \biggl(
	 \Delta_1^{(12345)}\gamma_{12}^2\gamma_{34}\gamma_{45}\, 
	\( \esf_{\cT_{3,1}}^{(12(3+4+5))} 
	+\esf_{\cT_{3,1}}^{(1(3+4+5)2)} + \esf_{\cT_{3,1}}^{((3+4+5)12)}\)
	\\
	&\qquad
	+\hf\,\Delta_2^{(12345)}\gamma_{1,4+5}^2\gamma_{23}\gamma_{45} 
	\(\,\esf_{\cT_{3,1}}^{1(2+3)(4+5)}
	+ \esf_{\cT_{3,1}}^{(2+3)1(4+5)}+\esf_{\cT_{3,1}}^{(2+3)(4+5)1} \)
	\biggr)\biggr].
\end{split}
\ee
This allows to apply the identity \eqref{ident-eT} and replace the coefficients $\esf_{\cT_{3,1}}$ 
by 1/3, as required by the Conjecture.

The second case to be analyzed is where all edges have vanishing $\Gamma_e$. 
Then $\Ssf_{\cT_{5,k}}$, defined in \eqref{exprSsf}, reduces to
$\esf_{\cT_{5,k}}\delta_{12345}$
where $\delta_{12345}=\prod_{e\in E_{\cT_{5,k}}}\delta_{\Gamma_e}$ and, due to proposition \ref{prop-vanishG}, 
it is independent both of the topology of the tree (labelled by $k$) and of its labelling. 
Thus, the corresponding contribution to \eqref{En0-n5-perm} is given by
\be
\label{En5-m0-first}
	\frac{1}{2\cdot 5!}\,
	\delta_{12345} \sum_{\sigma \in S_5}
	\sigma\[\frK_{\cT_{5,1}}\,\esf_{\cT_{5,1}}
	+\frK_{\cT_{5,2}}\,\esf_{\cT_{5,2}}
	+\frac{1}{12}\,\frK_{\cT_{5,3}}\,\esf_{\cT_{5,3}} \].
\ee
The idea, which allows to prove the Conjecture, i.e. to replace all $\esf_{\cT_{5,k}}$ by some rational numbers,
is to recombine these coefficients so that one could apply identities similar to the cyclic identity that was used above.
To this end, we use the fact that, given the conditions implied by the Kronecker symbols,
the factors $\frK_{\cT_{5,k}}$ also satisfy numerous relations. We apply them to reduce the number of appearing structures
and this will automatically produce the combinations of $\esf_{\cT_{5,k}}$ subject to the mentioned identities.

We start by noticing the following two relations
\begin{subequations}
\bea
\frK_{\cT_{5,3}}^{(12345)} +\frK_{\cT_{5,2}}^{(12345)}+\frK_{\cT_{5,2}}^{(14352)}+\frK_{\cT_{5,2}}^{(15324)}
&\stackrel{\delta}{=}& 0,
\label{id-frK-StarFork}
\\
\frK_{\cT_{5,1}}^{(12345)}-\frK_{\cT_{5,1}}^{(51234)} - \frK_{\cT_{5,2}}^{(43215)}+\frK_{\cT_{5,2}}^{(12345)} 
&\stackrel{\delta}{=} & 0,
\label{id-frK-LinFork}
\eea
\end{subequations}
where the sign $\stackrel{\delta}{=}$ indicates that they hold in the presence of $\delta_{12345}$.
The first one can be used in \eqref{En5-m0-first} to exclude $\frK_{\cT_{5,3}}$. 
Moreover, since the permutations appearing in the last three terms in \eqref{id-frK-StarFork} leave 
$\esf_{\cT_{5,3}}^{(12345)}$ invariant, we have
\be
\frac{1}{12}\sum_{\sigma\in S_5}\sigma\Bigl[\frK_{\cT_{5,3}}\,\esf_{\cT_{5,3}} \Bigr]
\stackrel{\delta}{=}
-\frac{1}{4}\sum_{\sigma\in S_5}\sigma\Bigl[\frK_{\cT_{5,2}}\,\esf_{\cT_{5,3}}\Bigr].
\label{rel-frK-32}
\ee
The second relation \eqref{id-frK-LinFork} is used to produce a sum over cyclic permutations of $\esf_{\cT_{5,1}}$.
Applying it four times on $\frK_{\cT_{5,1}}^{(12345)}$ and its cyclic permutations, one finds 
\be
\label{cycAvg-frK51}
\begin{split}
	\frK_{\cT_{5,1}}^{(12345)} \stackrel{\delta}{=}&\, \frac{1}{5}\sum_{\tau \in \cC_5} \tau\[\frK_{\cT_{5,1}}^{(12345)}\]
	-\frac{1}{5}\[2\frK_{\cT_{5,2}}^{(12345)} +2\frK_{\cT_{5,2}}^{(54321)} 
	+ \frK_{\cT_{5,2}}^{(51234)}+ \frK_{\cT_{5,2}}^{(15432)}
	\right.
	\\
	&\left.\qquad\qquad
	- \frK_{\cT_{5,2}}^{(34512)}-\frK_{\cT_{5,2}}^{(32154)}- 2\frK_{\cT_{5,2}}^{(23451)}-2\frK_{\cT_{5,2}}^{(43215)} \].
\end{split}
\ee
Substituting \eqref{rel-frK-32} and \eqref{cycAvg-frK51} into \eqref{En5-m0-first} 
and shifting the sum over permutations so that to bring the labels of each factor $\frK_{\cT_{5,k}}$ to
the standard form (12345), one obtains 
\be
\frac{1}{2\cdot 5!}\,
\delta_{12345} \sum_{\sigma \in S_5}
\sigma\[\frac{1}{5}\, \frK_{\cT_{5,1}}\sum_{\tau\in C_5}\esf_{5,1}^\tau
+\frK_{\cT_{5,2}}\,\check\esf_{\cT_{5,2}}\],
\label{En5-m0-first2}
\ee
where we denoted
\be
\label{esf5b}
\begin{split}
\check\esf_{\cT_{5,2}}
&=
\esf_{\cT_{\cT_{5,2}}}^{(12345)}-\frac14\, \esf_{\cT_{5,3}}^{(12345)}  +\frac{2}{5}\( 2 \esf_{\cT_{5,1}}^{(51234)} 
+\esf_{\cT_{5,1}}^{(45123)} -2 \esf_{\cT_{5,1}}^{(12345)} - \esf_{\cT_{5,1}}^{(23451)} \) 
\end{split}
\ee
and used the fact that $\esf_{\cT_{5,1}}^{(12345)}=\esf_{\cT_{5,1}}^{(54321)}$. 
The advantage of \eqref{En5-m0-first2} is that the coefficients $\esf_{\cT_{5,1}}$ appear in the combination that
is subject to the identity \eqref{ident-eT}, which for $n=5$ is \eqref{cycle-T51} evaluated at $\bfx=0$.
This implies that all $\esf_{\cT_{5,1}}$ can be replaced by $e_{\cT_{5,1}}=1/5$, in agreement with the value 
given in the table of appendix \ref{ap-coef}.

Thus, it remains to analyze only the last term in \eqref{En5-m0-first2}. As a first step, we apply the symmetry 
of $\frK_{\cT_{5,2}}^{(12345)}$ under permutation $\sigma_{45}$ to rewrite this term as
\be
\label{def-hesf}
\sum_{\sigma \in S_5}
\sigma\Bigl[\frK_{\cT_{5,2}}\,\check\esf_{\cT_{5,2}}\Bigr]
= -\frac{1}{20}\sum_{\sigma \in S_5}\frK_{\cT_{5,2}}^\sigma+\cC_{5,2},
\qquad
\cC_{5,2}=\sum_{\sigma \in S_5}
\sigma\Bigl[\frK_{\cT_{5,2}}\,\hesf_{\cT_{5,2}}\Bigr]
\ee
where
\be
\hesf_{\cT_{5,2}} =\hf\( \check\esf_{\cT_{5,2}}^{(12354)}+\check\esf_{\cT_{5,2}}^{(12354)}\)+\frac{1}{20}\, .
\label{def-hesf}
\ee
Here we subtracted $e_{\cT_{5,2}}-\frac14\, e_{\cT_{5,3}}=-\frac{1}{20}$, which is the rational number 
expected to replace $\check\esf_{\cT_{5,2}}$ according to appendix \ref{ap-coef}, so that now we need to prove 
that the contribution $\delta_{12345}\cC_{5,2}$ actually vanishes.
This would complete the proof of the Conjecture at $n=5$.

Let us return to the identity \eqref{id-frK-StarFork}. From the symmetry of the tree $\cT_{5,3}$ it follows that
the first term is invariant under any permutation of labels $(1245)$. 
This fact can be used to derive relations that involve only $\cT_{5,2}$:
\be
\begin{split}
	\frK_{\cT_{5,2}}^{(12345)}+\frK_{\cT_{5,2}}^{(14352)}+\frK_{\cT_{5,2}}^{(15324)}
	\stackrel{\delta}{=}
	&\,\frK_{\cT_{5,2}}^{(21345)}+\frK_{\cT_{5,2}}^{(24351)}+\frK_{\cT_{5,2}}^{(25314)}
	\\
	\stackrel{\delta}{=}
	&\,\frK_{\cT_{5,2}}^{(42315)}+\frK_{\cT_{5,2}}^{(41352)}+\frK_{\cT_{5,2}}^{(45321)}
	\\
	\stackrel{\delta}{=}
	&\,\frK_{\cT_{5,2}}^{(52314)}+\frK_{\cT_{5,2}}^{(51342)}+\frK_{\cT_{5,2}}^{(54321)}.
\end{split}
\label{id-frK-ForkFork}
\ee
Then we split the coefficient of $\frK_{\cT_{5,2}}$ as $1=\frac34+\frac{1}{12}+\frac{1}{12}+\frac{1}{12}$,
leave the first term intact and apply the three relations \eqref{id-frK-ForkFork} to the last three terms.
Substituting the resulting expression in $\cC_{5,2}$ and, as before, 
shifting the sum over permutations so that to bring the labels of each factor $\frK_{\cT_{5,2}}$ to
the standard form (12345), one obtains
\be
\cC_{5,2}\stackrel{\delta}{=} 
\sum_{\sigma \in S_5}\sigma\Bigl[
\frK_{\cT_{5,2}}\,\tesf_{\cT_{5,2}}\Bigr],
\label{tesf-52}
\ee
where 
\be
\label{def-tesf}
\begin{split}
\tesf_{\cT_{5,2}} =&\,  \frac34\, \hesf_{\cT_{5,2}}^{(12345)}
-\frac14\(\hesf_{\cT_{5,2}}^{(14325)}+\hesf_{\cT_{5,2}}^{(15324)}\) 
+ \frac{1}{12} \Biggl(\sum_{\tau \in C_3(124)}\tau\[\hesf_{\cT_{5,2}}^{(51324)}\]
\\
&\qquad
+\sum_{\tau \in C_3(145)}\tau\[\hesf_{\cT_{5,2}}^{(21345)} \]
+\sum_{\tau \in C_3(125)}\tau\[\hesf_{\cT_{5,2}}^{(41325)}\]\Biggr),
\end{split}
\ee
where $C_3(ijk)$ is the group of cyclic permutations of the labels $i,j,k$.
Although the new coefficients $\tesf_{\cT_{5,2}}$ have complicated expressions in terms of the original ones 
$\esf_{\cT_{5,k}}$, they turn out to satisfy a very simple identity, 
which can be shown to be a linear combination of the identities \eqref{ident-Phi5}, \eqref{cT5-PhiE-ident123} at $\bfx=0$,
\be
\sum_{\tau\in C_5} \tau\circ(1+\iota)\[\tesf_{\cT_{5,2}}\]=0,
\label{OrbitAvgpm}
\ee 
where $\iota$ is the inversion of labelling introduced in \$\ref{subsec-identPhi}.

Unfortunately, the identity \eqref{OrbitAvgpm} is not sufficient to show the vanishing of $\cC_{5,2}$.
Nevertheless, it suggests a strategy to do this. 
Let us denote $\cO_A$, $A=1,\dots,12$, the orbits of the group $C_5\cup \iota$ inside $S_5$ 
(defined by its action from the {\it right}).
Let $\sigma_A$ be a representative of $\cO_A$, which we take to be the permutation labelled by 
$(i_1i_2 3 i_4i_5)$ with 1 either on the first or second position. We order the orbits according to 
the natural ordering of the representatives (see \eqref{matrix}). It is clear also that $\cO_A$
splits into two orbits of $C_5$: $\cO^+_A=\{\sigma_A \circ\tau, \ \tau\in C_5\}$ and
$\cO^-_A=\cO^+_A\circ\iota$.\footnote{Note that for $\tau\in C_5$ one has $\iota\circ\tau=\tau^{-1}\circ\iota$.} 
With these notations, the identity \eqref{OrbitAvgpm} implies 12 equalities
\be
\ssf_A^+=-\ssf_A^-,
\qquad
\ssf_A^\pm= \sum_{\tau\in \cO^\pm_A} \tesf_{\cT_{5,2}}^\tau.
\label{OrbitAvgpm-s}
\ee

Next, we note that, upon summing over cyclic permutations
of the relation \eqref{id-frK-LinFork}, all terms $\frK_{\cT_{5,1}}$ vanish and one gets a relation
very similar to \eqref{OrbitAvgpm}:
\be
\sum_{\tau \in C_5} \tau\circ(1-\iota)\[\frK_{\cT_{5,2}}\]\stackrel{\delta}{=} 0.
\label{id-frK-Orbits}
\ee
Of course, we can also write it as in \eqref{OrbitAvgpm-s}, but instead we will first use the symmetry 
of $\frK_{\cT_{5,2}}$ under $\sigma_{45}$. Then we get the following 12 relations
\be
\cK_A\equiv \ksf_A^+-\ksf_A^-
\stackrel{\delta}{=}  0,
\qquad 
\ksf_A^\pm= \hf\sum_{\tau\in \cO^\pm_A}\tau\circ(1+\sigma_{45}) \[\frK_{\cT_{5,2}}\].
\label{idOrbits-4545}
\ee
Let us subtract them all from \eqref{tesf-52} after multiplication by some coefficients $c_A$
to be determined later.
In other words, we consider
\be 
\cC_{2,5} \stackrel{\delta}{=} \sum_{\sigma \in S_5}\sigma\Bigl[
\frK_{\cT_{5,2}}\,\tesf_{\cT_{5,2}}\Bigr] 
-\sum_{A=1}^{12} c_A\cK_A=\sum_{\sigma \in S_5}\sigma\Bigl[
\frK_{\cT_{5,2}}\,\resf_{\cT_{5,2}}\Bigr] ,
\ee 
where at the last step we introduced new coefficients $\resf_{\cT_{5,2}}$.
One can realize that they can be written as 
\be
\resf_{\cT_{5,2}}^\sigma = \tesf_{\cT_{5,2}}^\sigma 
- \frac{1}{2} \(\epsilon(\sigma) c_{A(\sigma)} + \eps(\sigma\circ\sigma_{45}) c_{A(\sigma\circ\sigma_{45})}\),
\label{def-resf}
\ee
where $A(\sigma)$ is the label of the orbit to which $\sigma$ belongs and $\eps(\sigma)=\pm 1$ if it is $\cO^\pm_A$,
respectively. 

To fix the coefficients $c_A$ we impose the condition that the cyclic averages of $\resf_{\cT_{5,2}}$,
similar to the ones introduced in \eqref{OrbitAvgpm-s}, identically vanish.
From \eqref{def-resf}, it is easy to see that 
\be 
\sum_{\tau\in \cO^\pm_A} \resf_{\cT_{5,2}}^\tau=\ssf_A^\pm \mp \hf\sum_{B=1}^{12}\cM_{AB}c_B,
\qquad
\cM_{AB}=5\delta_{AB} + m_{AB},
\label{newsum-esf}
\ee
where 
\be 
m_{AB}=\eps\((\cO^+_A\circ\sigma_{45})\cap\cO_B \)
\ee 
and the $\eps$-function is defined to be 0 if the intersection is empty.\footnote{It is well defined 
	because it is impossible to have non-empty intersection with both $\cO^\pm_B$ at the same time.}
With the ordering of orbits specified above, the matrix $m_{AB}$ is given by
\setcounter{MaxMatrixCols}{20}
\be{\footnotesize
\begin{tabular}{cc}
	$A$ \qquad\quad\ \  $\sigma_A$ \ \ \\ \cline{1-1}
\begin{tabular}{ccc}
	1 &: & (12345) \\ 
	2 &: & (12354) \\ 
	3 &: & (14325) \\ 
	4 &: & (14352) \\ 
	5 &: & (15324) \\ 
	6 &: & (15342) \\ 
	7 &: & (21345) \\ 
	8 &: & (21354) \\ 
	9 &: & (41325) \\ 
	10 &: & (41352) \\ 
	11 &: & (51324) \\ 
	12 &: & (51342) \\ 
\end{tabular}
&  \qquad
{\normalsize $m=$}
$\begin{pmatrix}
	0 & 1 & -1 & 0 & 0 & -1 & 1 & 0 & 0 & 0 & 1 & 0 \\
	1 & 0 & 0 & -1 & -1 & 0 & 0 & 1 & 1 & 0 & 0 & 0 \\
	-1 & 0 & 0 & 1 & -1 & 0 & 0 & 0 & 1 & 0 & 0 & 1 \\
	0 & -1 & 1 & 0 & 0 & -1 & 1 & 0 & 0 & 1 & 0 & 0 \\
	0 & -1 & -1 & 0 & 0 & 1 & 0 & 0 & 0 & 1 & 1 & 0 \\
	-1 & 0 & 0 & -1 & 1 & 0 & 0 & 1 & 0 & 0 & 0 & 1 \\
	1 & 0 & 0 & 1 & 0 & 0 & 0 & 1 & -1 & 0 & 0 & 1 \\
	0 & 1 & 0 & 0 & 0 & 1 & 1 & 0 & 0 & 1 & -1 & 0 \\
	0 & 1 & 1 & 0 & 0 & 0 & -1 & 0 & 0 & 1 & 1 & 0 \\
	0 & 0 & 0 & 1 & 1 & 0 & 0 & 1 & 1 & 0 & 0 & -1 \\
	1 & 0 & 0 & 0 & 1 & 0 & 0 & -1 & 1 & 0 & 0 & 1 \\
	0 & 0 & 1 & 0 & 0 & 1 & 1 & 0 & 0 & -1 & 1 & 0
\end{pmatrix}$,
\end{tabular}}
\label{matrix}
\ee
Then one can show that $\cM$ is an invertible matrix with
\be
\cM^{-1}_{AB}=\frac{1}{20}\(5\delta_{AB} - m_{AB}\).
\ee
Therefore, requiring the vanishing of \eqref{newsum-esf} results in
\be 
c_A= \pm \frac{1}{10}\sum_{B=1}^{12}\( 5\delta_{AB} - m_{AB}\) \ssf^\pm_B\, ,
\ee
The two solutions are mutually consistent due to the identities \eqref{OrbitAvgpm-s}.
Picking any of them, by means of \eqref{esf5b}, \eqref{def-hesf}, \eqref{def-tesf} and \eqref{def-resf}, 
one gets an explicit expression of $\resf_{\cT_{5,2}}$ in terms of $\esf_{\cT_{5,k}}$.
Then, using Mathematica, we found that the resulting expression is actually a linear combination of 
the identities \eqref{ident-Phi5} and \eqref{cT5-PhiE-ident123} at $\bfx=0$ and thus vanishes, as required by the Conjecture.

Let us remark in the end that this derivation fixes only a linear combination of two coefficients:
$e_{\cT_{5,2}}-\frac14\, e_{\cT_{5,3}}=-\frac{1}{20}$. 
The ambiguity comes from the fact that, in the presence of $\delta_{12345}$, 
the factors $\frK_{\cT_{5,k}}$ are not independent and satisfy various relations.
It is conceivable that the ambiguity is resolved at higher orders. However, the calculations 
at higher $n$ become extremely cumbersome and we refrain from doing them.
What is important is that the values predicted by the general solution \eqref{res-eT} in the main text
are consistent with the above constraint.

\providecommand{\href}[2]{#2}\begingroup\raggedright\endgroup


\end{document}